\crefname{section}{Sect.}{sections}
\Crefname{section}{Section}{Sections}
\crefname{table}{Tbl.}{tables}
\Crefname{table}{Table}{Tables}
\crefname{proposition}{Prop.}{propositions}
\Crefname{proposition}{Proposition}{Propositions}
\let\vec\mathbf
\title{Symbolic Quantitative Information Flow for Probabilistic Programs}
\author{Philipp Schröer\inst{1}\orcidID{0000-0002-4329-530X} \and Francesca Randone\inst{2}\orcidID{0009-0002-3489-9600} \and Raúl Pardo\inst{3}\orcidID{0000-0003-0003-7295} \and Andrzej Wąsowski\inst{3}\orcidID{0000-0003-0532-2685}}
\institute{RWTH Aachen University, Germany \email{phisch@cs.rwth-aachen.de}
\and
University of Trieste, Italy \email{frarandone@gmail.com}
\and
IT University of Copenhagen, Denmark,
\email{\{raup,wasowski\}@itu.dk}}
\newcommand{\morespace}[1]{\quad#1\quad}
\newcommand{\normal}{\ensuremath{\mathcal{N}}}
\newcommand{\bernoulli}{\ensuremath{\mathcal{B}}}
\newcommand{\rvx}{\ensuremath{X}}
\newcommand{\rvy}{\ensuremath{Y}}
\newcommand{\income}{\ensuremath{\mathit{inc}}}
\newcommand{\female}{\ensuremath{\mathit{fem}}}
\newcommand{\nfem}{\ensuremath{\mathit{nfem}}}
\newcommand{\out}{\ensuremath{\mathit{output}}}
\newcommand{\response}{\ensuremath{r}}
\newcommand{\trv}[1]{#1} % random variable for termination outcome of variable #1
\newcommand{\var}{\ensuremath{\mathit{var}}}
\newcommand{\rvi}[0]{\ensuremath{I}}
\newcommand{\rvo}[0]{\ensuremath{O}}
\newcommand{\supportx}{\ensuremath{\mathcal{X}}}
\newcommand{\inputs}[0]{\ensuremath{\mathcal{I}}}
\newcommand{\outputs}[0]{\ensuremath{\mathcal{O}}}
\newcommand{\dists}[0]{\ensuremath{\mathcal{D}}}
\newcommand{\entropy}{\ensuremath{H}}
\newcommand{\program}[0]{\ensuremath{\pi}}
\newcommand{\kl}{\ensuremath{D_{\mathrm{KL}}}}
\newcommand{\doublemid}{~||~}
\newcommand{\density}[1]{f_{#1}}
\newcommand{\symSkip}{\texttt{skip}}
\newcommand{\symSemi}{;~}
\newcommand{\symAsgn}{\texttt{:=}}
\newcommand{\symDist}{\ensuremath {\sim}}
\newcommand{\symIf}{\texttt{if}}
\newcommand{\symElse}{\texttt{else}}
\newcommand{\symFor}{\texttt{for}}
\newcommand{\symObserve}{\texttt{observe}}
\newcommand{\blockStart}{\{}
\newcommand{\blockEnd}{\}}
\newcommand{\stmtSkip}{\symSkip}
\newcommand{\stmtSeq}[2]{{#1}\symSemi{#2}}
\newcommand{\stmtAsgn}[2]{{#1}~\symAsgn~{#2}}
\newcommand{\stmtDist}[2]{{#1}~\symDist~{#2}}
\newcommand{\stmtIf}[3]{\symIf~{#1}~\blockStart~{#2}~\blockEnd~\symElse~\blockStart~{#3}~\blockEnd}
\newcommand{\stmtFor}[3]{\symFor~{#1} \texttt{ in 1..}{#2} ~\blockStart~{#3}~\blockEnd}
\newcommand{\stmtObserve}[1]{\symObserve~{#1}}
\newcommand{\stmt}{S}
\newcommand{\bexp}{b}
\newcommand{\aexp}{e}
\newcommand{\wpStates}{\mathsf{States}}
\newcommand{\Expectations}{\mathbb{E}}
\newcommand{\CExpectations}{\Expectations \times \Expectations}
\newcommand{\wpState}{\sigma}
\newcommand{\wpVals}{{\mathbb{Q}_{\geq 0}}}
\newcommand{\expLeq}{\sqsubseteq}
\newcommand{\expA}{X}
\newcommand{\expB}{Y}
\newcommand{\expOne}{\mathbf{1}}
\newcommand{\sem}[1]{\llbracket{#1}\rrbracket}
\newcommand{\symWp}{\mathsf{wp}}
\renewcommand{\wp}[1]{\symWp\sem{#1}}
\newcommand{\symCwp}{\mathsf{cwp}}
\newcommand{\cwp}[1]{\symCwp\sem{#1}}
\newcommand{\substBy}[2]{[{#1} \mapsto {#2}]}
\newcommand{\iverson}[1]{[{#1}]}
\newcommand{\symTrue}{\texttt{true}}
\newcommand{\symFalse}{\texttt{false}}
\newcommand{\aux}[1]{\underline{{#1}}}
\newcommand{\condforceStart}{\llbracket}
\newcommand{\condforceEnd}{\rrbracket}
\newcommand{\condforce}[1]{\llbracket{#1}\rrbracket}
\newcommand{\wpcompcolor}[1]{\textcolor{lightgray!40!black}{#1}}
\newcommand{\wpcomp}[1]{{\wpcompcolor{\!\!{\fatslash}\!\!{\fatslash}~~\vphantom{G'} {#1}}}}
\newcommand{\probCondTerm}[4]{p({#2} \models_{#1} \diamond ({#3} = {#4}) \mid \neg\lightning)}
\newcommand{\gms}{\mathcal{G}}
\newcommand{\gm}{G}
\newcommand{\soga}[1]{\mathsf{SOGA}\llbracket{#1}\rrbracket}
\newcommand{\exact}[1]{\mathsf{Exact}\llbracket{#1}\rrbracket}
\newcommand{\mmop}{\mathcal{T}_\textrm{gm}}
\newcommand{\marg}[2]{\textrm{Marg}_{#1}(#2)}
\newcommand{\prob}[2]{P_{#1}(#2)}
\newcommand{\cond}[2]{#1_{\mid #2}}
\newcommand{\alog}{{-}\log}
\newcommand{\palog}{\log}
\titlerunning{Symbolic Quantitative Information Flow for Probabilistic Programs}
\authorrunning{P.\ Schröer, F.\ Randone, R.\ Pardo, A.\ Wąsowski}
\begin{document}

\maketitle

\begin{abstract}
  It is of utmost importance to ensure that modern data intensive systems do not leak sensitive information. In this paper, the authors, who met thanks to Joost-Pieter Katoen, discuss symbolic methods to compute information-theoretic measures of leakage: entropy, conditional entropy, Kullback–Leibler divergence, and mutual information. We build on two semantic frameworks for symbolic execution of probabilistic programs. For discrete programs, we use weakest pre-expectation calculus to compute exact symbolic expressions for the leakage measures.  Using Second Order Gaussian Approximation (SOGA), we handle programs that combine discrete and continuous distributions. However, in the SOGA setting, we approximate the exact semantics using Gaussian mixtures and compute bounds for the measures. We demonstrate the use of our methods in two widely used mechanisms to ensure differential privacy: randomized response and the Gaussian mechanism.
\end{abstract}

%%% Local Variables:
%%% mode: LaTeX
%%% TeX-master: "main"
%%% End:

%% NOTE: This intro assumes that the goal of the paper is to symbolically compute leakage metrics. Thus, it tries to argue why this is useful and how the methods and techniques (developed by people that JPK brought together provide a novel contribution in doing so)

\section{Introduction}
\label{sec:intro}

\emph{Data is the modern oil.}\footnote{\url{https://www.economist.com/leaders/2017/05/06/the-worlds-most-valuable-resource-is-no-longer-oil-but-data}} The data processing systems are pipes; pipes that leak. They leak not oil, but sensitive, often personal and private, information. Authorities around the world have developed data protection regulations and laws to ensure this leakage is limited: HIPAA (1996) in the US~\cite{hipaa1996}, the GDPR in the EU/EEA (2016)~\cite{gdpr} or the Cybersecurity Law of the People's Republic of China (2016)~\cite{noauthor_cybersecurity_2016}. Whenever data is being processed, and the results are being released, the data processors now need to ask whether releasing the computation results does not leak too much sensitive information.  This has led to establishment of \emph{quantitative information flow} as an active area of research~\cite{DBLP:conf/csfw/ClarksonMS05,DBLP:conf/fossacs/Smith09,qifbook.2020}.

The key idea of quantitative information flow is to analyze a computer program as if it was a communication channel \cite{ShannonWeaver49} between sensitive secret inputs and publicly released outputs.  A channel models an attacker that has some prior information about the input and that can observe the output.  Observing the output may allow the attacker to learn something about the input.  Some learning typically happens, as otherwise, either the attacker must have known the sensitive input to begin with, or the program output is not useful---it is not correlated with the input, e.g.\ constant or random.  Thus the question is not really whether a leakage happens, as in qualitative information flow, but rather how much information is leaked about each secret and under what circumstances.  To this end, researchers have proposed a number of metrics. In this paper, we focus on a selection of popular ones: Shannon entropy, Kullback-Leibler divergence, and mutual information.

Many methods have been proposed for computing quantitative information flow~\cite{qifbook.2020,DBLP:conf/cav/ChothiaKN13,DBLP:conf/esorics/ChothiaKN14,QUAIL,HyLeak,spire,DBLP:conf/sp/Cherubin0P19,romanelli.leaves.2020}.  Some of the authors of the present paper, have been involved in the development of
Privug\,\cite{privug}, a recent addition that relies on Bayesian inference (probabilistic programming) to quantify privacy risks in data analytics programs. In Privug, the attacker's knowledge is modeled as a probability distribution over secret inputs, and it is subsequently conditioned on the disclosed program outputs. Bayesian probabilistic programming (sampling-based \cite{mcmc,privug} and static \cite{RPWeebiprq23}) is used to compute the posterior attacker knowledge after observing the outputs in the program.
An advantage of Privug is that it works on the program source code, and can be extended to compute most information leakage metrics~\cite{qifbook.2020}.  No actual data is needed for the analysis.

So far, the static methods used in Privug have not been extended to compute leakage metrics (only methods for programs under strict restrictions or sampling-based methods have been explored). In this work, we investigate the theoretical underpinning of approximating leakage metrics for probabilistic programs symbolically using two foundations of semantics for two classes of probabilistic programs: (1) the weakest-pre-expectation (WPE) semantics  for discrete programs \cite{10.1145/3622870,kaminski_phd,DBLP:series/mcs/McIverM05}, and (2) the Gaussian mixture semantics  for continuous programs \cite{RandoneBIT24}. In particular, we contribute:%
\begin {itemize}

  \item Definitions of WPE and Gaussian-Mixture semantics for a simple probabilistic programming language;

  \item Methods for symbolic calculation (or approximation) of entropy, KL-diver\-gence, and mutual information for discrete and continuous probabilistic program, under suitable assumptions;

  \item Examples of analysis of differential privacy mechanisms using the provided symbolic calculation methods.

\end {itemize}%
For years, Joost-Pieter Katoen has led a very successful research group on probabilistic models and programs at RWTH, a very influential environment and a source of many inspiring ideas.  In particular, the work on symbolic analyses in the WPE semantics \cite{10.1145/3622870,kaminski_phd} and on using generating functions \cite{DBLP:conf/cav/ChenKKW22} brought Francesca, Philipp, and Andrzej together in Aachen in Spring of 2023. If not for Joost-Pieter, we would not have known each other, and this work would not have existed.  Likewise, this work would not have been possible without the methods developed under Joost-Pieter's supervision. Dear Joost-Pieter, thank you for providing these foundations and for helping us to meet. Please accept this paper as a gift.  Happy Birthday!

%%% Local Variables:
%%% mode: LaTeX
%%% TeX-master: "main"
%%% End:

\section{Related Work}
\label{sec:related_work}

%% belief revision
Clarkson and coauthors introduced the attacker belief revision framework about two decades ago~\cite{DBLP:conf/csfw/ClarksonMS05}. Already there, they have used KL divergence as a measure of information leakage. They introduce the notion of experiment to model the attacker and system behavior; to quantify leakage in different situations. Their framework defines quantitative integrity metrics, namely, contamination, channel suppression and program suppression~\cite{DBLP:journals/mscs/ClarksonS15}---these metrics are entropy-based measures. They show the relation of these metrics to standard notions of privacy for databases, e.g., $k$-anonymity~\cite{sweeney2002k}, $l$-diversity~\cite{li2006t} and differential privacy~\cite{DBLP:journals/fttcs/DworkR14}. Furthermore, they extend the framework to quantify leakage as the system and attacker behavior changes~\cite{DBLP:conf/sp/MardzielAHC14}. To this end, they update the notion of experiment with strategies that model the system and attacker behavior. The notion of experiment and strategy is encoded as probabilistic programs with bounded loops. Thus, they can be captured in the programming language presented in this paper. Although these works consider similar leakage measures as in this paper, they only consider systems with discrete random variables.

% Bayes vulnerability
After the seminal work of Smith~\cite{DBLP:conf/fossacs/Smith09}, quantitative information flow metrics based on Bayes vulnerability have become more prominent in the quantitative information flow literature~\cite{qifbook.2020}.
Bayes vulnerability focuses on computing the expected reward (or probability of success) of an attack.
These metrics have been applied to analyze census data, the topics API (an alternative to cookies proposed by Google for interest based advertising\footnote{\url{https://github.com/patcg-individual-drafts/topics}}), machine learning systems, voting protocols, or differential privacy~\cite{DBLP:journals/popets/AlvimFMMN22,DBLP:conf/wpes/AlvimFMN23,DBLP:journals/ijon/GrossoPPP23,qifbook.2020}.
Unlike our work, these applications of quantitative information flow do not focus on program analysis, but on system analysis.
They (manually) encode the target systems as channels (in the information theoretic sense), and compute leakage metrics using the channel encoding.
Instead, we provide a probabilistic semantics for programs, and focus on quantifying leakage symbolically in those programs directly, not using matrix representations.
For systems using discrete random variables, we could compute Bayes vulnerability metrics by directly applying the definition---although this would be inefficient for systems with large output domains.
We leave as future work devising methods to efficiently compute Bayes vulnerability metrics (and bounds) using the probabilistic semantics presented in this paper.

%% automation
There exist some works that attempt to automate quantitative information flow analysis, either by estimating leakage via sampling or by computing exact expressions.
Leakage estimation works can be either black-box or white-box techniques.
There are black-box methods to compute Bayes vulnerability~\cite{DBLP:conf/sp/Cherubin0P19,DBLP:conf/ccs/0002CPP20}, and for entropy-based measures such as mutual information or min-entropy~\cite{DBLP:conf/csfw/ChothiaG11,DBLP:conf/cav/ChothiaKN13,DBLP:conf/esorics/ChothiaKN14}.
Privug is a white-box method that uses Markov Chain Monte Carlo to estimate leakage measures \cite{privug}.
QUAIL~\cite{QUAIL} (also white-box) uses forward state exploration to build a Markov chain, which is then used to estimate mutual information.
In the present paper, we instead compute exact solutions and upper bounds to support the worst-case analysis.
Crucially, underestimating leakage could lead to missing vulnerability in the target systems.
In a parallel work \cite{RPWeebiprq23}, we have used multivariate Gaussian distributions as a semantics to compute exact mutual information and KL divergence.
SPIRE~\cite{spire} uses the exact inference engine PSI~\cite{PSI,gehr_psi_2020} to compute leakage metrics.
Although we do not explore tools in the present paper, the two semantics we use are supported by tools~\cite{RandoneBIT24,10.1145/3622870}, and the Gaussian semantics engine has been shown to be more efficient than PSI for programs using only continuous random variables~\cite{RRPWqprgm24}.
We leave as future work exploring the development of an automatic method to compute leakage metrics based on the findings in this work.

% model-counting
A line of work uses model-counting to compute entropy-based leakage measures~\cite{DBLP:conf/sp/BackesKR09,DBLP:conf/cav/Tizpaz-NiariC019,DBLP:conf/eurosp/BangRB18}.
Backes et al.\ propose to compute equivalence classes for secrets that produce the same output\,\cite{DBLP:conf/sp/BackesKR09}.
A similar method is proposed by Tizpas-Niari and coauthors to compute entropy-based measures\ \cite{DBLP:conf/cav/Tizpaz-NiariC019}, and use them as a basis to introduce synthetic delays in programs so that leakage does not go over a given threshold.
Bang et al.\ use symbolic execution and model-counting to estimate mutual information by sampling satisfying models from the path conditions computed by the symbolic executor~\cite{DBLP:conf/eurosp/BangRB18}.
These model-counting approaches implicitly assume a discrete uniform distribution over secrets.
In our work, we support multiple types of discrete and continuous distributions, which allows for modeling different types of attackers.
\looseness -1

%%% Local Variables:
%%% mode: LaTeX
%%% TeX-master: "main"
%%% End:

\section{Programming Language}

Consider a simple programming language:
\begin {align}
  \stmt ::= ~ & \stmtSkip \mid \stmtSeq{\stmt}{\stmt} \mid \stmtAsgn{x}{\aexp} \nonumber
  \\
   & \mid \stmtDist{x}{\dists(\overline\theta)} \nonumber \\
   & \mid \symObserve(\bexp) \nonumber \\
   & \mid \stmtIf{\bexp}{\stmt}{\stmt} \nonumber \\
   & \mid \stmtFor{i}{n}{\stmt}
\end {align}

\noindent
The language admits random variables $x$ and expressions over them ($e$).  We leave the language of expressions underspecified for simplicity.
Distributions $\dists(\bar{\theta})$ may be discrete or continuous.
The symbolic semantics of the language depends on this choice: discrete (\Cref{sec:discrete_semantics}) or continuous (\Cref{sec:continouous_semantics}).
We also admit sequencing, if-statements and a limited version of for-loops (syntactic sugar for unrolling the body $n$ times, for a constant $n$).  For this overarching syntax, we now define discrete and continuous probabilistic semantics.

The developments below follow a parallel structure: the semantics are defined for discrete and continuous programs in separate subsections that share the same overarching syntax but are otherwise unrelated. In \Cref{sec:metrics}, we follow a similar parallel split, computing leakage metrics in the two semantics---the metrics are the same in both subsections, but the methods are different and the parallel developments are not directly related to each other.  Finally, in \Cref{sec:case_studies}, we present example analyses of suitable differential privacy mechanisms for the discrete and continuous models, again in parallel, to illustrate the corresponding methods.  We hope that this structure, will allow the reader to appreciate how the same tasks are solved differently in the two semantics for discrete and continuous programs.

\subsection{Discrete Symbolic Semantics: WPE}
\label{sec:discrete_semantics}

A \emph{(program) state} $\wpState \colon \mathbf{Vars} \to \mathbb{Q}_{\geq 0}$ maps every variable to a nonnegative rational value. We denote the set of program states by $\wpStates$. An \emph{expectation} maps every program state $\sigma$ to a nonnegative real value or infinity. The complete lattice of expectations $(\Expectations, \expLeq)$ is given by
\begin{multline}
    \Expectations = \{ \expA \mid X \colon \wpStates \to \mathbb{R}_{\geq 0}^\infty \}
    \enspace,\\
    \text{ where }
    \quad
    \expA \expLeq \expB \text{ iff } \text{for all } \wpState \in \wpStates: \expA(\wpState) \leq \expB(\wpState)~.
\end{multline}
A well-known expectation is the \emph{Iverson bracket} $\iverson{\bexp} \in \Expectations$ where $\bexp$ is a Boolean expression. For a state $\wpState$, $\iverson{\bexp}(\wpState) = 1$ iff $\bexp(\wpState)$ holds and $\iverson{\bexp}(\wpState) = 0$ otherwise. We write $\expA\substBy{x}{e}$ for the substitution (overriding) of $x$ by $e$ in the expectation $\expA$, that is: $\lambda \wpState.~ \expA(\wpState\substBy{x}{e(\wpState)})$, where $\wpState\substBy{x}{e(\wpState)}$ denotes the state $\wpState$ with $x$ overriden to the result of evaluating the expression $e$ in $\wpState$. We denote by $\expOne$ the expectation returning 1 for any $\wpState$.
Furthermore, given a one-bounded expectation $\expA \expLeq \expOne$, we will denote by $\alog(\expA)$ the expectation which, for every state $\wpState$, maps the result of $\expA(\wpState)$ to $\alog(\expA(\wpState))$.

\renewcommand \arraystretch {1.40}

\begin{table}[t]
    \centering
    \newcommand{\recwp}[1]{\wp{#1}}
    \begin{tabular}{l>{\hspace{9mm}}l}
        \toprule
        $\stmt$ & $\wp{\stmt}(\expA)$ \\
        \midrule
        $\stmtSkip$ & $\expA$ \\
        $\stmtAsgn{x_i}{\aexp}$ & $\expA\substBy{x_i}{\aexp}$ \\
        $\stmtDist{x_i}{\mu}$ & $\lambda \wpState.~ \int_\wpVals (\lambda v.~ X\substBy{x_i}{v})\, d\mu_\wpState$ \\
        $\stmtObserve{b}$ & $\iverson{b} \cdot \expA$ \\
        $\stmtSeq{\stmt_1}{\stmt_2}$ & $\recwp{\stmt_1}(\recwp{\stmt_2}(\expA))$ \\
        $\stmtIf{\bexp}{\stmt_1}{\stmt_2}$ & $\iverson{\bexp} \cdot \recwp{\stmt_1}(\expA) + \iverson{\neg b} \cdot \recwp{\stmt_2}(\expA)$ \\
        $\stmtFor{i}{n}{\stmt}$ & $\recwp{\stmtAsgn{i}{1}}(\recwp{\stmtSeq{\stmt}{\stmtAsgn{i}{i + 1}}}^n(\expA))$ \\
        \bottomrule
    \end{tabular}

    \bigskip

    \caption{The Weakest pre-expectation ($\symWp$) semantics of a statement $\stmt$ for an expectation $\expA \! \in \! \Expectations$}
    \label{table:wp-semantics}

\end{table}

For a program $\stmt$, the \emph{weakest pre-expectation transformer} $\wp{\stmt} \colon \Expectations \to \Expectations$ is a mapping between expectations. The expression $\wp{\stmt}(\expA)(\wpState)$ maps an expectation $\expA \in \Expectations$ and an initial state $\wpState$ to the expected value $\wp{\stmt}(\expA)(\wpState)$ of $\expA$ on termination of $S$. The $\symWp$ transformer is defined inductively in \Cref{table:wp-semantics}.

The $\symWp$ transformer computes expected values on termination, mapping paths where observations fail to zero.
The expected value $\wp{\stmt}$ of an expectation $\expA$ on termination of program $\stmt$ is thus given as follows.
A $\stmtSkip$ statement does nothing and returns $\expA$ unchanged.
The deterministic assignment $\stmtAsgn{x_i}{\aexp}$ results in a substitution of the variable $x_i$ in $\expA$ by the right-hand side $\aexp$.
Sampling $x_i$ from a distribution expression $\mu$ in state $\wpState$ is given by the Lebesgue integral of substitutions of $x_i$ by values from the distribution evaluated in the current state, $\mu_\wpState$.
The $\symWp$ semantics of $\stmtObserve{b}$ evaluates to zero if $b$ does not evaluate to $\symTrue$ in the current state $\wpState$, otherwise it is $X$.
The weakest pre-expectation transformer operates backwards, therefore the sequential execution first computes $\wp{\stmt_2}(\expA)$ and then applies $\wp{\stmt_1}$ to the result.
An \texttt{if} statement on the Boolean expression $b$ evaluates to $\wp{\stmt_1}(X)$ if $b$ evaluates to \symTrue{} and to $\wp{\stmt_2}(X)$ otherwise.
A $\symFor$ loop with exactly $n$ iterations results in an unfolding of $n$ loop iterations with counter increments.

To compute \emph{conditional} expected values, i.e. an expected values conditioned on observations not failing, we define the \emph{conditional weakest pre-expectation} $\symCwp$ on \emph{conditional expectations} which are pairs of expectations $(\expA, \expB) \in \CExpectations$.
For each state $\wpState$, the conditional expectation $(\expA, \expB)$ can be evaluated as $\frac{\expA(\wpState)}{\expB(\wpState)}$ provided $\expB(\wpState) \neq 0$.
We use the notation $\condforce{(\expA, \expB)} = \frac{\expA}{\expB}$ to denote the evaluation of a conditional expectation $(\expA, \expB)$ into an expectation when it is defined.
For a statement $\stmt$ and a conditional expectation $(\expA \times \expB) \in \CExpectations$, we define the function $\symCwp$ simply as the pair of two $\symWp$ computations:
\begin{equation}
    \cwp{\stmt} \colon \CExpectations \to \CExpectations,~ \cwp{\stmt}(\expA, \expB) = (\wp{\stmt}(X), \wp{\stmt}(\expB))~.
\end{equation}
For a given initial state $\wpState \in \wpStates$, the expression $\cwp{\stmt}(\expA, \expB)(\wpState)$ is the conditional expectation on termination of $S$ of $\expA$ given $\expB$.
Let $\expOne$ denote the expectation that maps all states to the value $1$.
Then, $\cwp{\stmt}(\expA, \expOne)(\wpState)$ represents the expected value of $\expA$ after running $\stmt$ when starting in state $\wpState$, conditioned on all $\symObserve$ statements succeeding.
This is because $\wp{\stmt}(\expOne)(\wpState)$ evaluates to the probability of all observations in the computation starting from $\wpState$ succeeding.

The definition of $\symCwp$ above is a simplified version of the one given by Jansen and coauthors~\cite{DBLP:journals/entcs/0001KKOGM15}.
We use the fact that all programs in our language terminate certainly and therefore do not need to condition on termination.
In the original work, this is done by dividing by weakest \emph{liberal} pre-expectations to obtain the probability of $\symObserve$ failure or nontermination.

\subsection{Continuous Symbolic Semantics: SOGA}
\label{sec:continouous_semantics}

For a random variable $\rvx$ and a distribution $D$, let $\density{\rvx}$ and $\density{D}$ denote their density functions. A normal random variable with mean $\mu$ and covariance matrix $\Sigma$ is denoted by $\normal(\mu, \Sigma)$ and has density function $\density{\normal(\mu, \Sigma)}.$ Let $x[x_i \to c]$ denote the vector with $c$ as $i$-th component, and every other component equal to $x_j$. Let $\marg{y}{D}$ denote the marginal of distribution $D$ with respect to subvector $y$. For a predicate $\bexp$,  $\prob{D}{\bexp}$ denotes the probability of $\bexp$ under distribution $D$ and $\cond{D}{\bexp}$ denotes distribution $D$ conditioned to $\bexp$.

We assume that the program is defined over program variables \(x_1, \dots, x_n\), which we arrange in a vector $x = (x_1, \hdots, x_n)$ taking values in $\mathbb{R}^n$. No other variables are defined in the program, so all variables are statically detected and included in the vector. The set of program states is given by the set of (possibly degenerate) Gaussian mixtures over $\mathbb{R}^n$ denoted as
\begin{equation}
  \gms ::= G \sim \sum_{i=1}^C \pi_i\, \normal(\mu_i, \sigma_i )~,  \enspace
\end{equation}
where $C \in \mathbb{N}_{\ge 0}$ denotes the number of \emph{components} in the mixture and $\pi_i$ for $i \in \{1, \hdots, C\}$ are real numbers in $[0,1]$ such that $\sum_{i=1}^C \pi_i =1$ called \emph{weights}. For simplicity, we use a multivariate Dirac delta centered at the zero vector to be the initial state for all programs, i.e.\ a degenerate Gaussian distribution with mean $0$ and the null covariance matrix.

Expressions can either be the product between two variables, or linear expressions on program variables, i.e.
\begin{equation}
  \mathbb{E} = \{ x_ix_j \mid i, j = 1 \dots n \} \cup \{ c_1x_1 + \hdots + c_nx_n + c_0 \mid c_i \in \mathbb{R} \} \enspace .
\end{equation}
Finally, for continuous programs, we restrict the set of distributions, which can be assigned to variables to the set of distributions with constant parameters, i.e. $ \dists (\overline\theta) = \dists $.

We define the SOGA semantics of a program $\stmt$ denotationally, as a forward operator $\soga{\stmt} \colon \gms \to \gms$. The operator is defined by structural induction on program structure, and uses two others:%
\begin{itemize}

    \item $\exact{\stmt} \colon \gms \to \gms$, which computes the exact denotational semantics of a probabilistic program \cite{kozen1979semantics}, itself defined recursively on the program syntax;

    \item $\mmop \colon \dists \to \gms$ is the moment-matching operator. Intuitively, \(\mmop\) returns a Gaussian mixture approximation of a density function, matching its first two moments. When $\mmop$ acts on a mixture, it acts on every component; when it acts on a single-component mixture, it maps it to a Gaussian having mean and covariance matrix matching the original distribution. Formally:
    \begin{equation}
      \mmop(D) = \begin{cases} \pi_1 \mmop(D_1) + \hdots + \pi_C \mmop(D_C) & \text{if } D=\sum_{i=1}^C \pi_iD_i \\ \normal(\mu_D, \Sigma_D) & \text{else.} \end{cases} \enspace .
    \end{equation}
    Notably, when acting on Gaussian mixtures, $\mmop$ leaves them unaltered\,\cite{RandoneBIT24}.
    \looseness -1

\end{itemize}
The recursive definition of $\soga{\stmt}$ and $\exact{\stmt}$ are given in \Cref{tab:gaussian-semantics}.

\begin{table}[t]
  \renewcommand \tabcolsep {4pt}
  \begin{tabularx}{\linewidth}{
    @{}
    >{\footnotesize}l
    >{\footnotesize}l
    >{\footnotesize}X
    @{}
  }
    \toprule
    $\stmt$ & $\exact{\stmt}(D)$ & $\soga{\stmt}(\gm)$ \\
    \midrule
    $\stmtSkip$ & $D$ & $\gm$ \\
    $\stmtAsgn{x_i}{\aexp}$ & $D' \text{ s.t. } x[x_i \to \aexp] \sim D'$ & $\mmop(\exact{\stmt}(\gm))$\\

    $\stmtDist{x_i}{D'}$ & $\marg{x \setminus x_i}{D} \otimes D'$ & $\marg{x \setminus x_i}{\gm} \otimes \mmop(D')$ \\
    $\symObserve(\bexp)$ & $\cond{D}{\bexp}$ & $\mmop(\cond{\gm}{\bexp})$ \\
    $\stmtSeq{\stmt_1}{\stmt_2}$ & $\exact{\stmt_2}(\exact{\stmt_1}(D))$ & $\soga{\stmt_2}(\soga{\stmt_1}(\gm))$ \\
    $\stmtFor{i}{n}{\stmt}$ & $\exact{\stmt}^n(D)$ & $\soga{\stmt}^n(\gm)$ \\

      $\stmtIf{\bexp}{\!\stmt_1\!}{\!\stmt_2\!}$
    & $\prob{D}{\bexp}\exact{\stmt_1}(\cond{D}{\bexp}) $
    & $\prob{\gm}{\bexp}\soga{\stmt_1}(\mmop(\cond{\gm}{\bexp})) $ \\[-1.5mm]

    & $\strut~ + \prob{D}{\neg\bexp}  \exact{\stmt_2}(\cond{D}{\neg \bexp})$
    & $\strut~ + \prob{\gm}{\neg\bexp} \soga{\stmt_2}(\mmop(\cond{\gm}{\neg \bexp}))$ \\
    \bottomrule
\end{tabularx}

\bigskip

\caption{Rules for the recursive definition of $\exact{\stmt}$ and $\soga{\stmt}$}
\label{tab:gaussian-semantics}
\end{table}

%%% Local Variables:
%%% mode: LaTeX
%%% TeX-master: "main"
%%% End:

\section{Symbolic Computation of Leakage Metrics}%
\label{sec:metrics}

\subsection{Basic definitions}

\paragraph {Entropy.}

Shannon entropy measures uncertainty of a random variable. The entropy of a random variable $\rvx$, denoted as $\entropy(\rvx)$,  is the negative expected $\log$ probability of each of its outcomes:
\begin{equation}
  \entropy(\rvx) = - \sum_x p(x)\log(p(x)) \enspace ,
\end{equation}
where $x \in \supportx$ is the support of $\rvx$ and $p(x)$ denotes its probability mass function~(pmf).
The continuous version, known as differential entropy, simply replaces the summation with an integral:
\begin{equation}
\entropy(\rvx) = - \int_x p(x)\log(p(x))dx \enspace .
\end{equation}
The differential entropy does not always behave well; for instance it is not guaranteed to be positive.  We ignore these issues in the present work.

Given a joint distribution $p(x,y)$ of two random variables $\rvx$ and $\rvy$, entropy can also be computed for a distribution, when one marginal is conditioned to some predicate $\varphi(\rvy)$, i.e., $\entropy(X \mid \varphi(\rvy))$. Examples of posterior distributions include $\rvx \mid \rvy = c$ and $\rvx \mid \rvy < c$. This can be used to model learning about secret variables from observations of public variables made by an attacker.

In Privug, a program \( \program : \inputs \to \outputs \) induces a joint probability distribution $p(\rvi, \rvo)$---if we lift it to the probability monad, so to run on a random, instead of concrete, variable $I$. In particular, we are interested in the uncertainty of the posterior distribution over the input, after observing (conditioning on) the output. For a program that returns a real number $c$, we use $\entropy(\rvi \mid \rvo = c)$ to quantify the uncertainty of the attacker about the input after observing $c$.

\paragraph{Conditional entropy,} denoted $\entropy(\rvx \mid \rvy)$, quantifies the amount of information needed to describe a random variable $\rvx$ given that the value of $\rvy$ is known. Formally, conditional entropy is defined as
\begin{equation}
  \entropy(\rvx \mid \rvy) = - \sum_{x,y} p(x,y)\log \frac{p(x,y)}{p(y)} \enspace .
\end{equation}
Note that this is different from the entropy of a posterior distribution $\entropy(\rvx \mid \varphi(\rvy))$. Conditional entropy summarizes the posterior entropy over all possible observations.  It is a useful metric when we are not interested in analyzing the risks of releasing a particular output, but analyzing risks for the program as a whole. By computing $\entropy(\rvi \mid \rvo)$, we quantify the attacker's expected uncertainty about (secret) inputs given that she observes any possible output.

\paragraph{KL divergence.}
Consider two random variables $P$ and $Q$ with pmfs $p(x)$ and $q(x)$ over the same sample space $\Omega_P = \Omega_Q$.  The KL divergence of $P$ from $Q$ is the amount of information needed to reconstruct $P$ if $Q$ is known. It is a common distance measure between distributions. Formally, it is defined as
\begin{equation}
   \kl(P \, || \, Q) = \sum_x p(x) \log \frac{p(x)}{q(x)} \enspace .
\end{equation}
KL divergence is not formally a distance metric, as it is not symmetric. As before, for continuous random variables we simply replace the summation with an integral:
\begin{equation}
  \kl(P || Q) = \int_x p(x) \log \frac{p(x)}{q(x)} dx \enspace .
\end{equation}
In Bayesian inference, the KL divergence of a posterior $P$ and a prior $Q$ expresses the amount of \emph{information gained} by revising one's beliefs from the prior distribution $Q$ to the posterior distribution $P$.
In the setting of quantitative information flow for a analyzing a program with input $\rvi$ and output $\rvo$, we compute $\kl(\rvi \mid \rvo = c \, || \, \rvi)$ to quantify how much an attacker learns from observing output $c$.

\newcommand{\mi}{\textrm{I}}
\paragraph{Mutual information.}
Given two random variables, $\rvx$ and $\rvy$, mutual information, written $\mi(\rvx; \rvy)$, is a measure characterizing how much information two random variables share; in this sense it is also a measure of independence between random variables. If $\mi(\rvx, \rvy) = 0$, then $\rvx$ and $\rvy$ are independent random variables. Mutual information is defined as
\begin{equation}
  \mi(\rvx; \rvy) = \entropy(\rvx) - \entropy(\rvx \mid \rvy) \enspace .
\end{equation}
We use mutual information to assess how independent is the output of a program $\rvo$ from the (secret) inputs $\rvi$. To this end, we compute $\mi(\rvi; \rvo)$ and check how far it is from zero. In general, it is difficult to interpret the magnitude of $\mi(\rvi; \rvo)$ as it corresponds to the intersection of the entropy of $\rvi$ and $\rvo$, i.e., $\mi(\rvi; \rvo) < \max(\entropy(\rvi), \entropy(\rvo))$.
As a consequence, a low value of $\mi(\rvi; \rvo)$ could be associated with inherent low entropy of $\rvi$ or $\rvo$.

\subsection{Discrete WPE Semantics}

\paragraph{Entropy.}
In the weakest pre-expectation framework, the entropy can be computed as a composition of weakest pre-expectation transformers.
We focus on the entropy of the distribution of values of variables on program termination.
This entropy is conditioned on all $\symObserve$ statements succeeding.

\begin{proposition}\label{proposition:wp-entropy}
  For a program $\stmt$ and variable $x_i$, the entropy $\entropy(x_i)$ of the value of $x_i$ on termination of $\stmt$ can be computed as follows:
  \begin{align*}
    \entropy(x_i) \quad&=\quad \cwp{\stmt}(\alog\condforce{\cwp{\stmt}(\iverson{\aux{x_i} = x_i}, \expOne)\substBy{\aux{x_i}}{x_i}}, \expOne)~.
  \end{align*}
\end{proposition}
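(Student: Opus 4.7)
The plan is to evaluate the nested $\cwp$ expression from the inside out, recognizing the inner conditional weakest pre-expectation as the posterior pmf of $x_i$ on termination of $\stmt$, and the outer one as the expected value of the negative logarithm of that pmf.

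First, I would unfold the inner expression. By the definitions of $\symCwp$ and $\condforce{\cdot}$,
\begin{equation*}
\condforce{\cwp{\stmt}(\iverson{\aux{x_i} = x_i}, \expOne)} \;=\; \frac{\wp{\stmt}(\iverson{\aux{x_i} = x_i})}{\wp{\stmt}(\expOne)}~.
\end{equation*}
Since the auxiliary variable $\aux{x_i}$ never appears on the left-hand side of an assignment or sampling statement in $\stmt$, a routine structural induction over the rules of \Cref{table:wp-semantics} shows that $\wp{\stmt}(\iverson{\aux{x_i} = x_i})(\sigma)$ depends on $\sigma$ only through $\aux{x_i}(\sigma)$. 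Consequently, at any state $\sigma$ with $\aux{x_i}(\sigma) = v$, the ratio above equals the conditional pmf $p(v)$ of $x_i$ on termination of $\stmt$, conditioned on all $\symObserve$ statements succeeding. The substitution $\substBy{\aux{x_i}}{x_i}$ then replaces $\aux{x_i}$ by the state's current value of $x_i$, so that the inner expression becomes the expectation $\sigma \mapsto p(x_i(\sigma))$; applying $\alog$ gives $\sigma \mapsto \alog p(x_i(\sigma))$, which depends on the state only through $x_i$.

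Second, I would apply the outer $\cwp{\stmt}(\cdot, \expOne)$. By the semantics of $\symCwp$, this computes the conditional expectation of $\alog p(x_i)$ under the posterior distribution of $x_i$ on termination, and therefore equals $\sum_v p(v)\cdot \alog p(v)$, which is exactly $\entropy(x_i)$ by the definition of Shannon entropy given earlier in this section.

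The main obstacle will be the structural-induction lemma asserting that $\wp{\stmt}$ is parametric in the auxiliary variable $\aux{x_i}$, so that the inner $\wp$ computation yields the same posterior pmf regardless of the state at which it is evaluated — in particular, at the final states reached during the outer $\wp$ computation. Once this lemma is available, the remaining steps are direct unfoldings of the definitions of $\cwp$, $\condforce{\cdot}$, $\alog$, and the substitution operator on expectations, together with the standard convention $0 \cdot \alog 0 = 0$ used in defining entropy.
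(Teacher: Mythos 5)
Your proposal matches the paper's proof essentially step for step: the inner $\symCwp$ is read (via the ghost variable $\aux{x_i}$) as the posterior pmf of $x_i$ on termination, the substitution $\substBy{\aux{x_i}}{x_i}$ ties the ghost to the program variable, and the outer $\symCwp$ takes the conditional expectation of the negative logarithm, yielding $-\sum_{v} p(v)\log p(v)$. The ``parametricity'' issue you flag as the main obstacle---that the inner expression must evaluate to the same posterior at the final states reached during the outer computation as at the initial state---is left entirely implicit in the paper's proof as well, so your version is, if anything, slightly more explicit about the one assumption both arguments rely on.
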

In the above, we use the inner $\symCwp$ transformer to compute the uncertainty for specific values of the variable $x_i$. The result of the inner $\symCwp$ computation is an expectation with a free \emph{ghost variable} $\aux{x_i}$ whose value is checked to be equal to the value of $x_i$ at the end of the program. We rename all occurrences of $\aux{x_i}$ to $x_i$. Then, we use the outer $\symCwp$ transformer again to compute the expected uncertainty, so that the expected uncertainty is computed based on the probabilities for $x_i$ in the unconditioned distribution of final states of the program, i.e. where $\symObserve$ statements are ignored. Note that we use the convention that $\alog(0) = \infty$ to ensure that intermediate expectations are well-defined.

\begin{proof}
  Let $\probCondTerm{\stmt}{\wpState}{x_i}{v_i}$ denote the probability of $\stmt$ terminating in a state $\wpState$ with $\wpState(x_i) = v_i$, conditioned on $\symObserve$ statements succeeding.
  We can express this probability using conditional weakest pre-expectations as the expected value on termination of the Iverson bracket $[x_i = v_i]$, i.e.
  \begin{align}
    \probCondTerm{\stmt}{\wpState}{x_i}{v_i} \quad&=\quad \condforce{\cwp{\stmt}([x_i = v_i], \expOne)}(\wpState)~. \label{eqn:wp-probdist}
  \end{align}
  Therefore, $\cwp{\stmt}([\aux{x_i} = x_i], \expOne)(\wpState)$ is the conditional expectation that evaluates to the conditional probability of $\stmt$ terminating in a state where $\aux{x_i} = x_i$.

  Let $\expB = \cwp{\stmt}([\aux{x_i} = x_i], \expOne)\substBy{\aux{x_i}}{x_i}$.
  By performing the substitution $\aux{x_i} \mapsto {x_i}$ in $\cwp{\stmt}([\aux{x_i} = x_i], \expOne)$, we obtain an expectation where $\expB(\wpState)$ represents the conditional probability of starting in a state $\wpState$ and terminating in a state $\wpState'$ in which $\wpState'(x_i)$ equals the initial value $\wpState(x_i)$ of $x_i$.

  To obtain \Cref{proposition:wp-entropy}, we compute the expected value of the negated logarithm of $\expB$ after running the program.
  The negation ensures that $\alog(\expB)$ is nonnegative and thus a valid expectation.
  \begin{align*}
    &\phantom{=}\quad \cwp{\stmt}(\alog\condforce{\cwp{\stmt}(\iverson{\aux{x_i} = x_i}, \expOne)}\substBy{\aux{x_i}}{x_i}, \expOne) \tag{\Cref{proposition:wp-entropy}} \\
    &=\quad \sum_{v_i} \probCondTerm{\stmt}{\wpState}{x_i}{v_i} \cdot (\alog\condforce{\cwp{\stmt}(\iverson{\aux{x_i} = x_i}, \expOne)}\substBy{\aux{x_i}}{v_i}) \\
    &=\quad \sum_{v_i} \probCondTerm{\stmt}{\wpState}{x_i}{v_i} \cdot (\alog(\probCondTerm{\stmt}{\wpState}{x_i}{v_i})) \tag{\Cref{eqn:wp-probdist}} \\
    &=\quad -\sum_{v_i} \probCondTerm{\stmt}{\wpState}{x_i}{v_i} \cdot \palog(\probCondTerm{\stmt}{\wpState}{x_i}{v_i}) \quad= \entropy(x_i)~. \tag*{\qed}
  \end{align*}
\end{proof}

\paragraph{Conditional entropy.}
Whereas \Cref{proposition:wp-entropy} uses weakest pre-expectations to compute the entropy conditioned on all $\symObserve$ statements succeeding, \Cref{proposition:wp-cond-entropy} below obtains the conditional entropy for a variable $y_i$ given that the value of $x_i$ is known \emph{and} that all $\symObserve$ statements succeed.

\begin{proposition}\label{proposition:wp-cond-entropy}
  For a program $\stmt$ and variables $x_i$ and $y_i$, the conditional entropy $\entropy(y_i \mid x_i)$ of the value of $y_i$ on termination of $\stmt$ given $x_i$ can be computed as follows:
  \begin{align*}
    \entropy(y_i \mid x_i) = \cwp{\stmt}(\alog \condforceStart
      \cwp{\stmt}(&
       \quad \iverson{\aux{x_i} = x_i \land \aux{y_i} = y_i}, \\
      &\quad \iverson{\aux{x_i} = x_i} \\
      &)\condforceEnd\substBy{\aux{x_i}}{x_i}\substBy{\aux{y_i}}{y_i}, \expOne
      )~.
  \end{align*}
\end{proposition}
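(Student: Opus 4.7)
The plan is to mimic the proof of Proposition~\ref{proposition:wp-entropy} but with the inner $\symCwp$ computing a \emph{conditional} probability $p(y_i \mid x_i)$ rather than a marginal probability. The key observation is that, unlike the entropy case where the second component of the inner $\symCwp$ was $\expOne$ (yielding the probability mass of a single value of $x_i$), here the second component is the Iverson bracket $\iverson{\aux{x_i}=x_i}$, so the ratio produced by $\condforce{\cdot}$ is exactly the conditional probability of the joint event given the event $x_i=\aux{x_i}$.

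First, I would establish the analogue of Eq.~\ref{eqn:wp-probdist}. Applying the definition of $\symCwp$ and interpreting $\wp{\stmt}(\iverson{\bexp})(\wpState)$ as the probability that $\stmt$ terminates in a state satisfying $\bexp$ with all $\symObserve$ statements succeeding, we get
\begin{align*}
\condforce{\cwp{\stmt}(\iverson{\aux{x_i}=x_i\land \aux{y_i}=y_i},\iverson{\aux{x_i}=x_i})}(\wpState)
\;=\;\frac{\probUncondTerm{\stmt}{\wpState}{x_i,y_i}{\aux{x_i},\aux{y_i}}}{\probUncondTerm{\stmt}{\wpState}{x_i}{\aux{x_i}}},
\end{align*}
which by elementary conditioning equals $p(y_i=\aux{y_i}\mid x_i=\aux{x_i},\neg\lightning)$. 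The ghost variables $\aux{x_i},\aux{y_i}$ are free in the resulting expectation.

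Next, I would perform the substitutions $\substBy{\aux{x_i}}{x_i}\substBy{\aux{y_i}}{y_i}$ exactly as in the entropy proof. The resulting expectation $\expB$ satisfies $\expB(\wpState')=p(y_i=\wpState'(y_i)\mid x_i=\wpState'(x_i),\neg\lightning)$, i.e.\ $\expB$ reports, for any state, the conditional probability of the observed value of $y_i$ given the observed value of $x_i$ under the posterior distribution of $\stmt$. Taking $\alog(\expB)$ yields a valid (nonnegative) expectation.

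Finally, I would apply the outer $\cwp{\stmt}(\cdot,\expOne)$, which (by Eq.~\ref{eqn:wp-probdist} and linearity of the weakest pre-expectation with respect to the joint distribution) computes the expected value of $\alog(\expB)$ under the posterior joint distribution of $(x_i,y_i)$ on termination. Unfolding this sum gives
\begin{align*}
\sum_{v_x,v_y}\probCondTerm{\stmt}{\wpState}{x_i,y_i}{v_x,v_y}\cdot\bigl(\alog p(y_i=v_y\mid x_i=v_x,\neg\lightning)\bigr),
\end{align*}
which matches the textbook definition $\entropy(y_i\mid x_i)=-\sum_{v_x,v_y}p(v_x,v_y)\log p(v_y\mid v_x)$. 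The main obstacle is bookkeeping: one must be careful that the inner $\cwp{\stmt}$ is evaluated (via $\condforce{\cdot}$) \emph{before} the substitutions and the outer $\cwp{\stmt}$ are applied, so that the ghost variables remain free during the inner computation and only get bound to the corresponding final-state values of $x_i,y_i$ when taking the outer expectation. Once this layering is respected, the calculation is a direct adaptation of the three-line chain at the end of the entropy proof.
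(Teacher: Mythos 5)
Your proposal is correct and follows exactly the route the paper intends: it is the proof of \Cref{proposition:wp-entropy} with the single modification that the inner $\symCwp$ now has second component $\iverson{\aux{x_i}=x_i}$, so that $\condforce{\cdot}$ yields the conditional probability $p(y_i \mid x_i, \neg\lightning)$ before the substitutions and the outer expectation are applied. The paper itself only states that the proof is analogous with this one change, and your write-up fills in that analogy faithfully, including the correct handling of the layering of $\condforce{\cdot}$, the substitutions, and the ghost variables.
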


\noindent
The proof is analogous to the above one. The only modification is in the inner conditional weakest pre-expectation computation, where we compute with respect to the probability of terminating with the value of $y_i$ given the value for $x_i$.
\looseness -1

\paragraph{KL divergence.}

To compute the KL divergence in the weakest pre-expectation framework, we can again use a similar construction to the above ones.
It is the conditional expected value over $x_i$ (represented as $\aux{x_i}$) of the expected logarithmic difference between the probabilities of events $\iverson{\aux{x_i} = x_i}$ and $\iverson{\aux{x_i} = y_i}$.
The logarithmic difference is given by the logarithm of the nested $\symCwp$ computation.

\begin{proposition}\label{proposition:wp-kl-divergence}
  Let $\stmt$ be a program and let $x_i$ and $y_i$ be program variables such that $y_i$ is not an input variable in $\stmt$.
  The KL divergence $\kl(x_i \doublemid y_i)$ between values of $x_i$ and $y_i$ on termination and conditioned on all observations succeeding is given by:
  \begin{align*}
    \kl(x_i \doublemid y_i) &= \cwp{\stmt}\left(\palog\left\condforceStart
      \cwp{\stmt}(\iverson{\aux{x_i} = x_i}, \iverson{\aux{x_i} = y_i})\substBy{\aux{x_i}}{x_i}, \expOne
      \right)\right\condforceEnd~.
  \end{align*}
\end{proposition}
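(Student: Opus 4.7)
The plan is to mirror the proof of \Cref{proposition:wp-entropy}, replacing the denominator $\expOne$ inside the inner $\symCwp$ by the Iverson bracket $\iverson{\aux{x_i}=y_i}$ so that the inner computation produces a ratio of probabilities rather than a single probability. Fix an initial state $\wpState$ and write $p_{x_i}(v) := \probCondTerm{\stmt}{\wpState}{x_i}{v}$ and $p_{y_i}(v) := \probCondTerm{\stmt}{\wpState}{y_i}{v}$ for the conditional pmfs of the two variables on termination of $\stmt$. The goal is to show that the right-hand side, evaluated at $\wpState$, unfolds to $\sum_v p_{x_i}(v)\,\pblog\bigl(p_{x_i}(v)/p_{y_i}(v)\bigr)$, which is exactly $\kl(x_i \doublemid y_i)$.

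First I would unfold the inner $\symCwp$. By its definition, $\cwp{\stmt}(\iverson{\aux{x_i}=x_i}, \iverson{\aux{x_i}=y_i})$ is the pair of two $\symWp$ computations whose force is the quotient $\wp{\stmt}(\iverson{\aux{x_i}=x_i})/\wp{\stmt}(\iverson{\aux{x_i}=y_i})$. Instantiating the ghost variable $\aux{x_i}$ to a value $v$ and applying the identity~\eqref{eqn:wp-probdist} once for $x_i$ and once for $y_i$, the numerator equals $\wp{\stmt}(\expOne)\cdot p_{x_i}(v)$ and the denominator equals $\wp{\stmt}(\expOne)\cdot p_{y_i}(v)$. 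The common normalisation factor $\wp{\stmt}(\expOne)(\wpState)$ then cancels, leaving $p_{x_i}(v)/p_{y_i}(v)$. The substitution $\substBy{\aux{x_i}}{x_i}$ turns this into an expectation that, evaluated on any state whose current value of $x_i$ is $v$, takes the value $p_{x_i}(v)/p_{y_i}(v)$.

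Applying $\palog$ and then the outer $\symCwp$ exactly as in the proof of \Cref{proposition:wp-entropy}, the expected value is a sum over possible final values $v$ of $x_i$, weighting each $\pblog\bigl(p_{x_i}(v)/p_{y_i}(v)\bigr)$ by the probability $p_{x_i}(v)$ that the outer run terminates with $x_i = v$. The resulting expression is the KL sum, which concludes the argument.

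The main obstacle, and the reason for the side condition that $y_i$ is not an input variable of $\stmt$, is justifying that the inner $\symWp$, when applied to an arbitrary final state $\wpState'$ produced by the outer run, still yields the marginal probability $p_{y_i}(v)$ computed at $\wpState$. If $\stmt$ could read the initial value of $y_i$, then $\wp{\stmt}(\iverson{\aux{x_i}=y_i})(\wpState')$ would depend on $\wpState'(y_i)$ and the common-factor cancellation would fail. A secondary subtlety is that the force operation requires $p_{y_i}(v) > 0$ whenever $p_{x_i}(v) > 0$---the standard absolute-continuity prerequisite for $\kl$ to be finite---which can be accommodated by the same $\alog(0) = \infty$ convention already adopted in the earlier propositions.
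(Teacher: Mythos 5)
Your proof follows exactly the route the paper intends: the paper's entire argument for \Cref{proposition:wp-kl-divergence} is the remark that it is ``analogous to the ones of the previous propositions,'' and your unfolding of the inner $\symCwp$ into the quotient $p_{x_i}(v)/p_{y_i}(v)$ via \cref{eqn:wp-probdist} (with the $\wp{\stmt}(\expOne)$ normalisation cancelling) followed by the outer $\symCwp$ producing the weighted sum is precisely that analogy carried out. Your final paragraph on why $y_i$ must not be an input variable and on the absolute-continuity convention actually supplies more justification than the paper itself records.
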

The proof is analogous to the ones of the previous propositions.

\subsection{Continuous GMM semantics}
\label{subsec:continuous_gmm_semantics}

The situation for the continuous programs is somewhat more complex.  Instead of computing exact values of the metrics, we will approximate them.  We begin with a negative result, stating that, in general, leakage metrics computed on the approximated output given by the SOGA semantics are not comparable with leakage metrics computed on the exact output.

\begin{proposition} \label{prop:negative}
    $\entropy(\exact{\stmt}(\gm))$ is not comparable to $\entropy(\soga{\stmt}(\gm))$.
\end{proposition}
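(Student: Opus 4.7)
The plan is to establish incomparability by exhibiting two concrete single-variable programs: one for which the SOGA entropy strictly exceeds the exact entropy, and one for which the reverse holds.

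For the direction $\entropy(\soga{\stmt}(\gm)) > \entropy(\exact{\stmt}(\gm))$, I would take $\stmt_1 := \stmtDist{x}{\mathrm{Uniform}(0,1)}$. Then $\exact{\stmt_1}(\gm)$ is the uniform distribution on $[0,1]$, whose differential entropy is $\log 1 = 0$. Under $\soga$, the sampling rule invokes $\mmop(\mathrm{Uniform}(0,1)) = \normal(1/2,\, 1/12)$ (matching the first two moments of the uniform), whose differential entropy is $\tfrac{1}{2}\log(2\pi e/12) = \tfrac{1}{2}\log(\pi e/6) > 0$, since $\pi e/6 > 1$. A direct comparison of these closed-form values yields the strict inequality.

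For the opposite direction, I would take $\stmt_2 := \stmtSeq{\stmtDist{x}{\normal(0,1)}}{\stmtIf{x > 0}{\stmtSkip}{\stmtSkip}}$. Under $\exact$, the two truncated halves recombine exactly into $\normal(0,1)$, so $\entropy(\exact{\stmt_2}(\gm)) = \tfrac{1}{2}\log(2\pi e)$. Under $\soga$, the if-rule applies $\mmop$ to each conditional half-normal before recombining, producing the two-component mixture $\tfrac{1}{2}\normal(\sqrt{2/\pi},\, 1-2/\pi) + \tfrac{1}{2}\normal(-\sqrt{2/\pi},\, 1-2/\pi)$. A short calculation confirms this mixture still has mean $0$ and variance $1$, matching $\normal(0,1)$; since the Gaussian is the unique maximum-entropy distribution with prescribed mean and variance, the bimodal mixture has strictly smaller entropy than $\normal(0,1)$.

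The main obstacle is the strict inequality in the second example: I must verify that the mixture genuinely differs from the standard normal (immediate from its bimodal shape, as the modes at $\pm\sqrt{2/\pi}$ are separated by more than twice the component standard deviation $\sqrt{1-2/\pi}$) and then appeal to the classical maximum-entropy characterization of the Gaussian. The remaining steps are routine: example one reduces to comparing closed-form uniform and Gaussian entropies, while example two only requires a simple mean-and-variance calculation for the two-component mixture.
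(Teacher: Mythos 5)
Your proposal is correct and follows the same overall strategy as the paper---exhibiting programs for which the entropy of the exact output is strictly smaller, respectively strictly larger, than the entropy of the SOGA output---but it uses different witnesses and a cleaner justification of the strict inequalities. The paper's witnesses are built from products of Gaussian variables: for $x_3 := x_1x_2$ with independent standard normal factors the exact marginal has the Bessel-type density $\frac{1}{\pi}K_0(|x_3|)$, which is compared to its moment-matched Gaussian ``by integration,'' and a product involving a mixture prior gives the reverse inequality, again justified by integration; the paper also records an intermediate program where the two semantics agree, which is not actually needed for incomparability. Your witnesses instead trigger moment matching at a uniform sampling statement and at an if-statement with a continuous guard, and both inequalities then follow analytically: the first from $0 < \tfrac12\log(\pi e/6)$, the second from the strict maximum-entropy property of the Gaussian among densities with prescribed mean and variance, once one checks that the two-component mixture is genuinely non-Gaussian (your bimodality criterion works, as does noting that its fourth moment differs from $3$). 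This makes your argument somewhat more self-contained than the paper's appeal to numerical integration; note also that your first example is itself an instance of the same maximum-entropy argument, since the uniform is not Gaussian. One point worth stating explicitly: the guard $x>0$ is admissible here because the restriction to guards depending only on discrete variables appears only as a \emph{sufficient condition for exactness} in the subsequent proposition, not as a constraint of the language, so using a continuous guard is precisely what makes the SOGA if-rule introduce an approximation.
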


\begin{proof}
    We use three programs to show that $\entropy(\exact{\stmt}(\gm))$ can be smaller, equal or greater than $\entropy(\soga{\stmt}(\gm))$. In all three examples we assume as output variable $x_3$.
    Consider the programs:
    \begin{align*}
    \stmt_1 &:: = x_1 \sim \normal(0,1); x_2 \sim \normal(0,1); x_3 := x_1x_2 \\
    \stmt_2 &:: = x_1 \sim \normal(0,1); x_2 \sim \normal(0,1); x_3 := x_1+x_2 \\
    \stmt_3 &:: = x_1 \sim 0.25\normal(0,1)+0.75\normal(0,2); x_2 \sim \normal(0,1); x_3 := x_1x_2
    \end{align*}

    \noindent
    In program $\stmt_1$, the marginal over $x_3$ computed by $\exact{\stmt_1}$ is the product of two independent standard Gaussian distributions with density $p(x_3) = \frac{1}{\pi}K_0(|x_3|)$, where $K_0$ is a modified Bessel function of the second kind \cite{weisstein}. The marginal computed by $\soga{\stmt_1}$ instead is a Gaussian having same mean and variance as the exact marginal. It is possible to verify by integration that in this case $\entropy(\exact{\stmt_1}(G)) < \entropy(\soga{\stmt_1}(G))$ for any $G$.

    For program $S_2$, $\exact{\stmt_2}(G) = \soga{\stmt_2}(G)$ for any $G$. Consequently, the equality holds also for the entropies.

    Finally, for program $\stmt_3$, the marginal over $x_3$ computed by $\exact{\stmt_3}$ is the product of a Gaussian mixture and a standard Gaussian distributions, expressible as a mixture of products of Gaussians. The marginal computed by $\soga{\stmt_3}$ instead is a Gaussian mixture in which each component moment-matches one component of the exact marginal. Computing the integrals it is possible to verify by integration that $\entropy(\exact{\stmt_3}(G)) > \entropy(\soga{\stmt_3}(G))$ for any $G$.
    \qed
\end{proof}

Since in general computing leakage metrics on SOGA outputs is not informative, we will restrict our attention to those programs for which the semantics computed by SOGA coincides with the exact semantics. In order for this to be true, we have to rule out all program instructions that introduce approximations, such as products and truncations of continuous random variables. The following proposition gives sufficient conditions for the exactness of SOGA semantics.

\begin{proposition} \label{prop:exact}
    Let \begin {align*}
  \stmt ::= ~ & \stmtSkip \mid \stmtAsgn{x}{\aexp} \mid \stmtDist{x}{\gm'} \mid \symObserve(\bexp_{\textit{obs}}) \mid \stmtSeq{\stmt}{\stmt} \mid \stmtIf{\bexp_{\textit{if}}}{\stmt}{\stmt}
\end {align*}
where:
\begin{itemize}
    \item[i)] $\aexp$ is linear;
    \item[ii)] $\gm' \in \gms$;
    \item[iii)] $\bexp_{\textit{obs}}$ is either an equality or only depends on discrete random variables;
    \item[iv)] $\bexp_{\textit{if}}$ only depends on discrete random variables.
\end{itemize}

\noindent
Then, for any $\gm \in \gms$ we have \(\exact{\stmt}(\gm) = \soga{\stmt}(\gm).\)
\end{proposition}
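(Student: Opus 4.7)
The plan is to proceed by structural induction on the statement $\stmt$, relying throughout on the key fact already cited from~\cite{RandoneBIT24}: the operator $\mmop$ is the identity on Gaussian mixtures. Consequently, if we can show that at every step $\exact{\stmt}(\gm)$ is already a Gaussian mixture in $\gms$, then the outer applications of $\mmop$ appearing in the SOGA rules of \Cref{tab:gaussian-semantics} vanish, and the two semantics coincide by definition.

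For the base cases, $\stmtSkip$ is immediate. For $\stmtAsgn{x_i}{\aexp}$ with linear $\aexp$, I would observe that the pushforward of a (possibly degenerate) Gaussian $\normal(\mu,\Sigma)$ through the affine map $x \mapsto x[x_i \to \aexp]$ is again a Gaussian; by linearity this extends componentwise to Gaussian mixtures, so $\exact{\stmt}(\gm) \in \gms$ and thus $\mmop(\exact{\stmt}(\gm))=\exact{\stmt}(\gm)$. For $\stmtDist{x_i}{\gm'}$ with $\gm' \in \gms$, the marginal $\marg{x \setminus x_i}{\gm}$ is a Gaussian mixture and its product with $\gm'$ is again a Gaussian mixture (mixtures are closed under independent product); since $\mmop(\gm') = \gm'$, the SOGA rule collapses to the exact rule.

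The main obstacle will be the observe case, $\symObserve(\bexp_{\textit{obs}})$. I would split on the two allowed forms of $\bexp_{\textit{obs}}$. If $\bexp_{\textit{obs}}$ only depends on variables that are discretely distributed (i.e.\ represented as Dirac deltas inside individual mixture components), then conditioning amounts to restricting to the subset of components whose Dirac supports satisfy $\bexp_{\textit{obs}}$ and renormalizing the weights; the result is manifestly in $\gms$. If $\bexp_{\textit{obs}}$ is an equality $\aexp_1 = \aexp_2$ involving continuous variables, I would use the well-known fact that the conditional of a (multivariate) Gaussian given an affine equality constraint is again Gaussian with closed-form mean and covariance; applying this componentwise shows $\cond{\gm}{\bexp_{\textit{obs}}} \in \gms$. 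In both subcases $\mmop$ is then the identity, giving $\soga{\symObserve(\bexp_{\textit{obs}})}(\gm)=\exact{\symObserve(\bexp_{\textit{obs}})}(\gm)$. I expect the care here to lie in formalizing the equality case for affine constraints on degenerate Gaussian mixtures, which may require handling components whose support does not intersect the constraint hyperplane (they get weight zero and drop out).

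For the inductive cases, $\stmtSeq{\stmt_1}{\stmt_2}$ follows directly: by IH $\soga{\stmt_1}(\gm)=\exact{\stmt_1}(\gm) \in \gms$, and applying IH once more to $\stmt_2$ on this mixture yields the result. For $\stmtIf{\bexp_{\textit{if}}}{\stmt_1}{\stmt_2}$ with $\bexp_{\textit{if}}$ depending only on discrete variables, the same componentwise partitioning argument as in the observe case shows that $\cond{\gm}{\bexp_{\textit{if}}}$ and $\cond{\gm}{\neg \bexp_{\textit{if}}}$ lie in $\gms$; combining IH on the two branches with the fact that convex combinations of Gaussian mixtures are Gaussian mixtures yields equality of the two semantics. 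Since the restricted grammar in the statement contains no for-loop, the induction is complete.
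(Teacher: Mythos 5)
Your proposal is correct and follows essentially the same route as the paper: structural induction on $\stmt$, with the key observation that each exact rule preserves membership in $\gms$ under the stated restrictions, so that $\mmop$ (being the identity on Gaussian mixtures) makes the SOGA rules collapse to the exact ones. Your extra care about affine pushforwards and degenerate components in the equality case only adds detail the paper leaves implicit.
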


\begin{proof}
By structural induction on $\stmt$.
\begin{itemize}

    \item $\stmt ::= \stmtSkip$. Conclusion follows from semantics definitions.

    \item $\stmt ::= \stmtAsgn{x}{\aexp}$, $\aexp$ linear. Since $\gm$ is a GM, $\exact{\stmt}(\gm)$ is also GM. Since $\mmop$ leaves GMs unaltered we have $\soga{\stmt}(\gm) = \mmop(\exact{\stmt}(\gm)) = \exact{\stmt}(\gm)$.

    \item $\stmt :: = \stmtDist{x}{\gm'}$, $\gm' \in \gms$. Since $\gm'$ is a GM, $\mmop$ leaves that unaltered and conclusion follows from semantics definitions.

    \item $\stmt :: = \symObserve(\bexp)$, where $\bexp$ is either an equality or depends only on discrete random variables. In both cases $\cond{\gm}{\bexp}$ is still a GM: in fact, a Gaussian mixture where one of the coordinates is conditioned to have a fixed value it is still a Gaussian mixture, though degenerate \cite{bishop2006pattern}; on the other hand, if $\bexp$ only depends on discrete random variables $\cond{\gm}{\bexp}$ is still a Gaussian mixture obtained from $\gm$ by selecting only the components where $\bexp$ is satisfied. Since GMs are left unaltered by $\mmop$, the conclusion follows from the respective semantics.

    \item $\stmt :: = \stmtSeq{\stmt_1}{\stmt_2}$ and suppose the statement holds for $\stmt_1, \stmt_2$. Then, by repeated applications of the inductive hypothesis:
    \begin{align*}
    \exact{\stmt}(\gm) = \exact{\stmt_2}(\exact{\stmt_1}(\gm)) & = \exact{\stmt_2}(\soga{\stmt_1}(\gm)) \\
    & = \soga{\stmt_2}(\soga{\stmt_1}(\gm)).
    \end{align*}

    \item $\stmtIf{\bexp}{\stmt}{\stmt}$, $\bexp$ only dependent on discrete random variables and suppose the statement holds for $\stmt_1, \stmt_2$. Then:
    \begin{align*}
    \exact{\stmt}(\gm) & = \prob{\gm}{\bexp}\cdot\exact{\stmt_1}(\cond{\gm}{\bexp}) + \prob{\gm}{\neg\bexp}\cdot\exact{\stmt_2}(\cond{\gm}{\neg \bexp}) \\
    & = \prob{\gm}{\bexp}\cdot\soga{\stmt_1}(\cond{\gm}{\bexp}) + \prob{\gm}{\neg\bexp}\cdot\soga{\stmt_2}(\cond{\gm}{\neg \bexp}) \\
    & = \soga{\stmt}(G)
    \end{align*}
    where the inductive hypothesis holds for $\exact{\stmt_1}(G)$ and $\exact{\stmt_2}(G)$ due to the fact that, since $\bexp$ only depends on discrete random variables, $\cond{\gm}{\bexp}$ and $\cond{\gm}{\neg \bexp}$ are still GMs.\qed
\end{itemize}
\end{proof}

\noindent
For programs in the scope of \Cref{prop:exact} we show how upper and lower bound on different leakage metrics can be computed.

\paragraph{Entropy.}

For a Gaussian random vector $\rvx \sim \normal(\mu, \Sigma)$ the entropy can be computed analytically as $\entropy(\rvx) = \frac{1}{2} \log((2\pi e)^n | \Sigma|)$. For Gaussian mixtures no exact analytical form is available for the entropy, however several lower and upper bounds can be derived.

The simplest upper bound, is given by the entropy of a Gaussian matching the mean and the covariance matrix of the mixture, exploiting the maximum entropy property of Gaussian distributions. (A Gaussian \(\normal(\mu,\sigma)\) has the highest entropy among all densities with mean \(\mu\) and standard deviation \(\sigma\).) The following proposition puts together results from \cite{huber2008entropy} to provide a lower and a tighter upper bound, which can be further refined exploiting the Gaussian approximation on selected groups of components.

\begin{proposition}[Bounds for entropy of GMs \cite{huber2008entropy}] \label{prop:Hbounds}

\begin{itemize}
    \item[a)] Let $\rvx = \sum_{i=1}^C \pi_i \normal(\mu_i, \Sigma_i)$ and $z_{i,j} = \phi(\mu_i; \mu_j, \Sigma_i + \Sigma_j)$, where $\phi(x; \mu, \Sigma)$ denotes the density of a Gaussian with mean $\mu$ and covariance matrix $\Sigma$. Then
    $$\entropy(\rvx) \ge -\sum_{i=1}^C \pi_i \log \left( \sum_{j=1}^C \pi_j z_{i,j} \right).$$

    \item[b)] Let $\rvx = \sum_{i=1}^C \pi_i \normal(\mu_i, \Sigma_i)$. Then
    $$\entropy(\rvx) \le -\sum_{i=1}^C \pi_i \log(\pi_i) + \sum_{i=1}^C \pi_i \entropy(\normal(\mu_i, \Sigma_i));$$
   Moreover, if $\rvx_1 = \sum_{i=1}^k \pi_i \normal(\mu_i, \Sigma_i)$ and $\rvx_2 = \sum_{i=k+1}^C \pi_i \normal(\mu_i, \Sigma_i)$ for some $k = 1, \hdots, C$, let  $\normal(\bar{\mu}, \bar{\Sigma})$ be a Gaussian with same mean and covariance matrix as $\rvx_1$ and $\pi = \sum_{i=1}^k \pi_i$. Then
    $$\entropy(\rvx) \le \entropy(\pi\normal(\bar{\mu},\bar{\Sigma}) + \rvx_2).$$
\end{itemize}
\end{proposition}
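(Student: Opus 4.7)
The plan is to prove the three inequalities separately, all starting from the representation
\[
\entropy(\rvx) \;=\; -\sum_i \pi_i \int \phi(x;\mu_i,\Sigma_i)\log p(x)\,dx, \qquad p(x)=\sum_j \pi_j\phi(x;\mu_j,\Sigma_j),
\]
and then bounding the inner integral in two complementary ways.

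For the lower bound in (a) I would apply Jensen's inequality in the form $\int \phi(x;\mu_i,\Sigma_i)\log p(x)\,dx \le \log \sum_j \pi_j \int \phi(x;\mu_i,\Sigma_i)\phi(x;\mu_j,\Sigma_j)\,dx$ (using concavity of $\log$ and treating $\phi(\cdot;\mu_i,\Sigma_i)$ as the reference density). The Gaussian product identity $\int \phi(x;\mu_i,\Sigma_i)\phi(x;\mu_j,\Sigma_j)\,dx = \phi(\mu_i;\mu_j,\Sigma_i+\Sigma_j) = z_{i,j}$ then converts the right-hand side into the claimed expression, and negation yields (a). For the simple upper bound in (b) I would instead use the pointwise estimate $p(x)\ge \pi_i\,\phi(x;\mu_i,\Sigma_i)$, so that $-\log p(x) \le -\log\pi_i-\log\phi(x;\mu_i,\Sigma_i)$; integrating each such inequality against $\pi_i\,\phi(x;\mu_i,\Sigma_i)$ and summing over $i$ gives $\entropy(\rvx) \le -\sum_i \pi_i\log\pi_i + \sum_i \pi_i\,\entropy(\normal(\mu_i,\Sigma_i))$. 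This is equivalent to the classical latent-variable inequality $\entropy(\rvx)\le \entropy(Z)+\entropy(\rvx\mid Z)$ with $Z$ the component index.

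For the refined upper bound in (b) I would combine the simple bound with the maximum-entropy property of Gaussians under moment constraints: since $\rvx_1/\pi$ has mean $\bar\mu$ and covariance $\bar\Sigma$, it satisfies $\entropy(\rvx_1/\pi)\le \entropy(\normal(\bar\mu,\bar\Sigma))$. Lifting this component-level inequality to the full mixture is the step I expect to be the main obstacle. Applying the simple upper bound separately to $\rvx$ (with the first $k$ components first grouped via the max-entropy inequality) and to $\rvx' := \pi\normal(\bar\mu,\bar\Sigma)+\rvx_2$ only produces a common dominating expression for both entropies, not the desired direct comparison; a latent-variable decomposition with a coarsened index $Z'\in\{1,2\}$ instead leaves an unsigned residual $\entropy(Z'\mid\rvx)-\entropy(Z'\mid\rvx')$ to control. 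To close the gap I would follow the argument in \cite{huber2008entropy}, which exploits that the moment-matched Gaussian is the \emph{unique} entropy-maximizer with the given moments and that this property survives in the convex-mixture context where only one block is Gaussianized while $\rvx_2$ is held fixed.
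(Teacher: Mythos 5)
Your proof of part (a) and of the first inequality in (b) is correct, and since the paper itself offers no proof of this proposition (it defers entirely to \cite{huber2008entropy}), your derivations are the relevant comparison point: Jensen's inequality applied to $\int \phi(x;\mu_i,\Sigma_i)\log p(x)\,dx$ combined with the Gaussian product identity gives (a), and the pointwise estimate $p(x)\ge \pi_i\,\phi(x;\mu_i,\Sigma_i)$ integrated against $\pi_i\,\phi(x;\mu_i,\Sigma_i)$ gives the first bound of (b). Both coincide with the arguments in the cited source.

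The problem is the refined bound in (b), and it is not a gap you can close by a finer appeal to the maximum-entropy principle: the inequality $\entropy(\rvx)\le \entropy(\pi\normal(\bar{\mu},\bar{\Sigma})+\rvx_2)$ is false as stated. You correctly identified that the residual $\entropy(Z'\mid \rvx')-\entropy(Z\mid \rvx)$ in the latent-variable decomposition is unsigned; in fact it can be negative by more than the gain $\pi\bigl(\entropy(\normal(\bar{\mu},\bar{\Sigma}))-\entropy(\rvx_1/\pi)\bigr)$, so the final step you delegate to \cite{huber2008entropy} cannot be carried out. A concrete counterexample: take $\rvx = 0.05\,\normal(-3,1)+0.05\,\normal(3,1)+0.9\,\normal(0,1)$ and merge the first two components, so $k=2$, $\pi=0.1$ and $\normal(\bar{\mu},\bar{\Sigma})=\normal(0,10)$; numerical integration gives $\entropy(\rvx)\approx 1.697$ nats but $\entropy(0.1\,\normal(0,10)+0.9\,\normal(0,1))\approx 1.639$ nats. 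The reason is that the original density has three well-separated modes and so retains most of the label entropy $-\sum_i\pi_i\log\pi_i\approx 0.39$, whereas after merging the two remaining components are nested at the origin and that contribution collapses. What actually holds (and what the chain $\entropy(\rvx)\le \entropy(L)+\entropy(\rvx\mid L)$ over the coarsened group index $L$, followed by the per-group max-entropy bound $\entropy(\rvx\mid L=1)\le \entropy(\normal(\bar{\mu},\bar{\Sigma}))$, proves) is the weaker statement
\begin{equation*}
  \entropy(\rvx) \;\le\; -\pi\log\pi-\sum_{i>k}\pi_i\log\pi_i+\pi\,\entropy(\normal(\bar{\mu},\bar{\Sigma}))+\sum_{i>k}\pi_i\,\entropy(\normal(\mu_i,\Sigma_i))\,,
\end{equation*}
i.e.\ the first bound of (b) evaluated on the merged mixture, not the entropy of the merged mixture itself. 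You should flag this discrepancy rather than try to prove the statement as written.
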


\noindent
Upper and lower bounds on the conditional differential entropy can be derived using the relation  $\entropy(\rvx \mid \rvy) = \entropy(\rvx, \rvy) - \entropy(\rvy)$. In particular, one can consider a lower bound $\entropy(\rvx, \rvy)$ and an upper bound on $\entropy(\rvy)$. The difference between the two will yield a lower bound on $\entropy(\rvx \mid \rvy)$. Similarly, considering the difference between an upper bound for $\entropy(\rvx, \rvy)$ and a lower bound for $\entropy(\rvy)$ will yield an upper bound on $\entropy(\rvx \mid \rvy)$.

%Huber, 2008 provides upper and lower bounds on entropy of gaussian mixture random vectors, tighter than the Gaussian approximation.

\paragraph{KL divergence.}

If $\rvx \sim \normal(\mu_X, \Sigma_X)$ and $\rvy \sim \normal(\mu_Y, \Sigma_Y)$ are Gaussian random vectors on $\mathbb{R}^n$, their KL divergence can be computed in closed form as:
$$\kl( \rvx || \rvy) = \frac{1}{2}\left[ \log \frac{|\Sigma_Y|}{|\Sigma_X|} + \text{Tr}(\Sigma_Y^{-1}\Sigma_X) - n + (\mu_X - \mu_Y)^T \Sigma_Y^{-1}(\mu_X - \mu_Y)\right].$$

If $\rvx$ and $\rvy$ are Gaussian mixtures, no analytical form is available; however upper and lower bounds can be derived analytically. We start by observing that we can express the KL divergence between $\rvx$ and $\rvy$ as
\begin{align*} \kl( \rvx || \rvy) = \int f_{\rvx}(x) \log\left( \frac{ f_{\rvx}(x) }{ f_{\rvy}(x) } \right) dx =  - \entropy(\rvx) - \underbrace{\int f_{\rvx}(x) \log\left( f_{\rvy}(x) \right) dx}_{:=L_{\rvx}(\rvy)}~,
\end{align*}
where $f_{\rvx}$ and $f_{\rvy}$ are the density of $\rvx$ and $\rvy$ respectively. Observe that in the case of Gaussians, because we can compute $\kl( \rvx || \rvy)$ and $\entropy(\rvx)$ analytically, we can also compute $L_\rvx(\rvy)$. In the case of Gaussian mixtures, since we have already provided upper and lower bounds for $\entropy(\rvx)$, then if we are able to provide upper and lower bounds for $L_{\rvx}(\rvy)$, we can easily bound the KL divergence as well. The following proposition, adapted from \cite{hershey2007approximating}, provides bounds for this term.

\begin{proposition}[Bounds for KL divergence \cite{hershey2007approximating}] \label{prop:Lbounds}
Let $\rvx \sim \sum_{a=1}^{A} \pi_a \rvx_a$, $\rvy \sim \sum_{b=1}^{B} \rho_b \rvy_b$ with $\rvx_a \sim \normal(\mu_a, \Sigma_a)$ and $\rvy_b \sim \normal(\mu_b, \Sigma_b)$. Then
\begin{itemize}
    \item For $z_{a,b} = \phi(\mu_a; \mu_b, \Sigma_a+\Sigma_b)$ it holds
    $$L_{\rvx}(\rvy) \le \sum_{a=1}^A \pi_a \log \left( \sum_{b=1}^B \rho_b z_{a,b} \right).$$
    \item For any $\phi_{a,b} \ge 0$, $\sum_{b=1}^B \phi_{a,b} = 1$ it holds
    $$ L_{\rvx}(\rvy) \ge \sum_{a=1}^A \sum_{b=1}^B \pi_a \phi_{b,a} \left[ \log\left( \frac{\rho_b}{\phi_{b,a}}\right) + L_{\rvx_a}(\rvy_b) \right]$$
    and the lower bound is maximized when $\phi_{b,a} = \frac{\rho_b e^{-\kl(\rvx_a || \rvy_b)}}{\sum_{b'=1}^B \rho_{b'} e^{-\kl(\rvx_a || \rvy_b)}}$.
\end{itemize}
\end{proposition}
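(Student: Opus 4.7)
The plan is to handle the two bounds separately, both via Jensen's inequality applied to the log inside the integral that defines $L_{\rvx}(\rvy)$, and to identify the maximizing choice in the lower bound by Lagrange multipliers.

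For the upper bound, I would start from $L_{\rvx}(\rvy) = \sum_{a=1}^A \pi_a \int \phi(x;\mu_a,\Sigma_a) \log\!\bigl(\sum_b \rho_b \phi(x;\mu_b,\Sigma_b)\bigr)\,dx$ and pull $\log$ outside the integral via Jensen's inequality using the concavity of $\log$ with weighting given by the Gaussian density $\phi(\cdot;\mu_a,\Sigma_a)$. This yields $\log\!\bigl(\sum_b \rho_b \int \phi(x;\mu_a,\Sigma_a)\phi(x;\mu_b,\Sigma_b)\,dx\bigr)$ for each $a$. I would then apply the standard identity for the integral of a product of Gaussians, $\int \phi(x;\mu_a,\Sigma_a)\phi(x;\mu_b,\Sigma_b)\,dx = \phi(\mu_a;\mu_b,\Sigma_a+\Sigma_b) = z_{a,b}$, and weight by $\pi_a$ to conclude.

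For the lower bound, I would use the variational trick of writing $f_{\rvy}(x) = \sum_b \phi_{a,b}\,\frac{\rho_b \phi(x;\mu_b,\Sigma_b)}{\phi_{a,b}}$ for any probability vector $(\phi_{a,b})_b$, then again apply Jensen's inequality to pull $\log$ inside the sum, now in the favorable direction. The term depending on $\log \phi(x;\mu_b,\Sigma_b)$ integrates to $L_{\rvx_a}(\rvy_b)$ by definition, and the term $\log(\rho_b/\phi_{a,b})$ comes out of the integral since it does not depend on $x$; summing over $a$ against the outer weights $\pi_a$ gives exactly the claimed bound (up to the notational swap between $\phi_{a,b}$ and $\phi_{b,a}$).

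To identify the optimal $\phi_{b,a}$, I would maximize the inner expression $\sum_b \phi_{a,b}\bigl[\log(\rho_b/\phi_{a,b}) + L_{\rvx_a}(\rvy_b)\bigr]$ for each fixed $a$, subject to $\sum_b \phi_{a,b} = 1$. Setting the gradient of the Lagrangian to zero gives $\phi_{a,b} \propto \rho_b\, e^{L_{\rvx_a}(\rvy_b)}$, and normalizing by the constraint, the common factor $e^{-\entropy(\rvx_a)}$ from the identity $L_{\rvx_a}(\rvy_b) = -\entropy(\rvx_a) - \kl(\rvx_a \| \rvy_b)$ cancels between numerator and denominator, leaving the stated expression involving $e^{-\kl(\rvx_a\|\rvy_b)}$. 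The only delicate point in the argument is tracking indexing conventions between the two statements; beyond that the proof is a routine application of Jensen's inequality plus the closed-form Gaussian convolution identity, so I do not anticipate a substantive obstacle.
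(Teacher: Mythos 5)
Your proposal is correct and follows essentially the same route as the paper's proof: Jensen's inequality with the Gaussian product-integral identity $\int \phi(x;\mu_a,\Sigma_a)\phi(x;\mu_b,\Sigma_b)\,dx = \phi(\mu_a;\mu_b,\Sigma_a+\Sigma_b)$ for the upper bound, and the variational reweighting $\sum_b \phi_{a,b}\,\rho_b\phi(x;\mu_b,\Sigma_b)/\phi_{a,b}$ followed by concavity of $\log$ for the lower bound. You in fact go slightly further than the paper, which stops after deriving the lower bound: your Lagrange-multiplier derivation of the optimal $\phi_{b,a}$, using $L_{\rvx_a}(\rvy_b) = -\entropy(\rvx_a) - \kl(\rvx_a \,||\, \rvy_b)$ so that the $e^{-\entropy(\rvx_a)}$ factor cancels under normalization, correctly establishes the maximizer claim that the paper leaves unproved.
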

\begin{proof}
    Let us first prove the upper bound:
    \begin{align*} L_{\rvx}(\rvy) & = \int \density{\rvx}(x) \log(\density{\rvy}(x)) dx \\
    & = \sum_{a=1}^A \pi_a \int \normal( \mu_a, \Sigma_a) \log \left( \sum_{b=1}^B \rho_b \normal( \mu_b, \Sigma_b ) \right) dx \tag{Jensen's ineq.} \\
    & \le \sum_{a=1}^A \pi_a \log \left( \sum_{b=1}^B \rho_b \underbrace{\int \normal( \mu_a, \Sigma_a)  \normal( \mu_b, \Sigma_b )  dx}_{= z_{a,b}} \right).
    \end{align*}
    For the lower bound:
    \begin{align*} L_{\rvx}(\rvy) & = \int \density{\rvx}(x) \log(\density{\rvy}(x)) dx = \\
    & = \sum_{a=1}^A \pi_a \int \normal( \mu_a, \Sigma_a) \log \left( \sum_{b=1}^B \rho_b \normal( \mu_b, \Sigma_b ) \right) dx \\
    & = \sum_{a=1}^A \pi_a \int \normal( \mu_a, \Sigma_a) \log \left( \sum_{b=1}^B \frac{\rho_b\phi_{a,b}}{\phi_{a,b}} \normal( \mu_b, \Sigma_b ) \right) dx \tag{concavity}  \\
    & \ge \sum_{a=1}^A \pi_a  \sum_{b=1}^B
    \phi_{b,a} \int \normal( \mu_a, \Sigma_a) \log \left( \frac{\rho_b}{\phi_{a,b}} \normal( \mu_b, \Sigma_b ) \right) dx \\
    & = \sum_{a=1}^A \sum_{b=1}^B
    \pi_a \phi_{b,a} \left[ \log \left( \frac{\rho_b}{\phi_{a,b}}\right) + \int \normal(\mu_a, \Sigma_a) \log\left( \normal( \mu_b, \Sigma_b ) \right) dx \right].
    \end{align*}
\end{proof}

\noindent
In \cite{hershey2007approximating}, a direct upper bound on $\kl(\rvx || \rvy)$ is derived variationally.

\begin{proposition}[Variational upper bound for KL \cite{hershey2007approximating}] \\
Let $\rvx \sim \sum_{a=1}^{A} \pi_a \rvx_a, \rvy \sim \sum_{b=1}^{B} \rho_b \rvy_b$ with $\rvx_a \sim \normal(\mu_a, \Sigma_a)$ and $\rvy_b \sim \normal(\mu_b, \Sigma_b)$. Then, for any $\phi_{b|a}, \psi_{a|b} \ge 0$ such that $\sum_{b=1}^B \phi_{b|a} = \pi_a, \sum_{a=1}^A \psi_{a|b} = \rho_b$ it holds:
$$\kl(\rvx || \rvy) \le \sum_{a=1}^A \sum_{b=1}^B  \phi_{b|a} \left[ \log\left(\frac{\phi_{b|a}}{\psi_{a|b}}\right) + \kl( \rvx_a || \rvy_b) \right]~. $$

Moreover, the upper bound is minimized by $\hat{\phi}_{b|a}, \hat{\psi}_{a|b}$ satisfying
$$  \hat{\psi}_{a|b} = \frac{\rho_b \hat{\phi}_{b|a} }{\sum_{a'=1}^A \hat{\phi}_{b|a'}}, \quad \hat{\phi}_{b|a} = \frac{\pi_a  \hat{\psi}_{a|b} e^{-\kl( \rvx_a || \rvy_b)}}{\sum_{b'=1}^B \hat{\psi}_{a|b'} e^{-\kl( \rvx_a || \rvy_{b'})}}~. $$
\end{proposition}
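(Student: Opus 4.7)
The plan is to establish the inequality by a variational/data-processing argument, and then obtain the stationary conditions on $\hat{\phi}_{b|a}$ and $\hat{\psi}_{a|b}$ through Lagrangian optimization subject to the given marginal constraints.

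First, I would introduce two auxiliary joint densities on $(x,a,b)$:
\begin{align*}
  \tilde{f}_{\rvx}(x,a,b) &= \phi_{b|a}\,\density{\rvx_a}(x),
  &
  \tilde{f}_{\rvy}(x,a,b) &= \psi_{a|b}\,\density{\rvy_b}(x).
\end{align*}
The marginalization constraints $\sum_b \phi_{b|a}=\pi_a$ and $\sum_a \psi_{a|b}=\rho_b$ ensure that marginalizing out $(a,b)$ recovers $\density{\rvx}$ and $\density{\rvy}$ respectively. Then, by monotonicity of KL divergence under marginalization (a direct consequence of Jensen's inequality, or equivalently the data processing inequality), we have $\kl(\rvx \doublemid \rvy) \le \kl(\tilde{f}_{\rvx} \doublemid \tilde{f}_{\rvy})$. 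Expanding the right-hand side factors neatly:
\begin{align*}
  \kl(\tilde{f}_{\rvx}\doublemid\tilde{f}_{\rvy})
  &= \sum_{a,b}\phi_{b|a}\int \density{\rvx_a}(x)\log\frac{\phi_{b|a}\,\density{\rvx_a}(x)}{\psi_{a|b}\,\density{\rvy_b}(x)}\,dx \\
  &= \sum_{a,b}\phi_{b|a}\Bigl[\log(\phi_{b|a}/\psi_{a|b}) + \kl(\rvx_a\doublemid\rvy_b)\Bigr],
\end{align*}
which is exactly the claimed upper bound.

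For the optimal $\hat{\phi}, \hat{\psi}$, I would form the Lagrangian with multipliers $\lambda_a$ for $\sum_b\phi_{b|a}=\pi_a$ and $\mu_b$ for $\sum_a\psi_{a|b}=\rho_b$, differentiate, and set the derivatives to zero. The $\psi_{a|b}$-stationarity gives $\psi_{a|b}=\phi_{b|a}/\mu_b$, and using the constraint on $\psi$ yields $\hat{\psi}_{a|b} = \rho_b\hat{\phi}_{b|a}/\sum_{a'}\hat{\phi}_{b|a'}$. The $\phi_{b|a}$-stationarity gives $\log(\phi_{b|a}/\psi_{a|b})+1+\kl(\rvx_a\doublemid\rvy_b)+\lambda_a=0$, so $\phi_{b|a}\propto \psi_{a|b}\, e^{-\kl(\rvx_a\doublemid\rvy_b)}$; normalizing via $\sum_b\phi_{b|a}=\pi_a$ produces $\hat{\phi}_{b|a} = \pi_a\hat{\psi}_{a|b}\,e^{-\kl(\rvx_a\doublemid\rvy_b)}/\sum_{b'}\hat{\psi}_{a|b'}\,e^{-\kl(\rvx_a\doublemid\rvy_{b'})}$.

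The main obstacle is the optimization rather than the inequality itself: the two stationary conditions are mutually coupled, so I would emphasize that they characterize a fixed point rather than a closed-form minimizer, and briefly argue (by convexity of $\phi\mapsto\phi\log\phi$ and linearity of the constraints) that the Lagrangian conditions indeed give the global minimum over the constraint polytope. The monotonicity/data-processing step is immediate from Jensen applied to the convex function $(u,v)\mapsto u\log(u/v)$, so I would cite it briefly rather than reprove it in detail.
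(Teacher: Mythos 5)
The paper does not prove this proposition: it is imported verbatim from Hershey and Olsen \cite{hershey2007approximating} and used as a black box (the surrounding text only notes the specialization $\phi_{b|a}=\psi_{a|b}=\pi_a\rho_b$). Your argument is therefore a self-contained derivation rather than a variant of anything in the paper, and it is correct. The lifting to joint densities $\tilde f_{\rvx}(x,a,b)=\phi_{b|a}\density{\rvx_a}(x)$ and $\tilde f_{\rvy}(x,a,b)=\psi_{a|b}\density{\rvy_b}(x)$---which integrate to one and marginalize to $\density{\rvx}$ and $\density{\rvy}$ precisely because $\sum_b\phi_{b|a}=\pi_a$ and $\sum_a\psi_{a|b}=\rho_b$---followed by monotonicity of KL under marginalization is exactly the standard route to this bound (equivalently, the log-sum inequality applied pointwise in $x$), and your expansion of $\kl(\tilde f_{\rvx}\doublemid\tilde f_{\rvy})$ into the claimed form is right. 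The Lagrangian computation also checks out: stationarity in $\psi_{a|b}$ gives $\psi_{a|b}\propto\phi_{b|a}$ normalized to $\rho_b$, and stationarity in $\phi_{b|a}$ gives $\phi_{b|a}\propto\psi_{a|b}e^{-\kl(\rvx_a\doublemid\rvy_b)}$ normalized to $\pi_a$, which are exactly the two coupled fixed-point equations of the statement; your remark that they characterize a fixed point rather than a closed form matches how the proposition is phrased. One small imprecision: to conclude that the stationary point is the global minimizer you need joint convexity of $(u,v)\mapsto u\log(u/v)$ (the perspective of $-\log$) in both families of variables, not merely convexity of $u\mapsto u\log u$; you invoke the former for the Jensen step but only the latter for the optimality claim. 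Since the objective is a sum of that jointly convex function over $(a,b)$ plus a term linear in $\phi$, minimized over a product of scaled simplices, the KKT conditions do characterize the global minimum, so the conclusion stands once that is stated correctly.
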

Observe that from the previous proposition, setting $\phi_{b|a} = \psi_{a|b} = \pi_a\rho_b$ one obtains the upper bound $\kl(\rvx || \rvy) \le \sum_{a=1}^A \sum_{b=1}^B \pi_a\rho_b \kl(\rvx_a || \rvy_b)~.$
%Variational upper bound from Hershey \& Olsen

\paragraph{Mutual information.}

Bounds on mutual information can be recovered from the previous results. In particular, one can use the bounds on entropy provided by \Cref{prop:Hbounds} in the relation
$$ \mi(\rvx ; \rvy) = \entropy(\rvx) - \entropy(\rvx | \rvy)~.$$
Alternatively one can exploit the bounds given by \Cref{prop:Lbounds} in the relation:
$$ \mi(\rvx ; \rvy) = \entropy(\rvx, \rvy) - L_{(\rvx, \rvy)}(\rvx) - L_{(\rvx, \rvy)}(\rvy).(\rvy)~.$$

%%% Local Variables:
%%% mode: LaTeX
%%% TeX-master: "main"
%%% End:

\section{Case Studies}\label{sec:case_studies}

\subsection{Binary Randomized Response (discrete)}
\label{subsec:randomized_response}

\newcommand{\brrResponseProb}{p}
\newcommand{\brrResponse}{r}
\newcommand{\brrRandomizeProb}{\frac{e^\epsilon}{e^\epsilon + 1}}
\newcommand{\brrRandomize}{t}
\newcommand{\brrResult}{x}
\newcommand{\brrOutput}{o}

We consider a randomized response mechanism that satisfies $\epsilon$-differential privacy.
Let $\vec{\response}$ denote a binary vector with responses $\response_i \in \{0,1\}$ from individuals $1,\ldots,n$.
The randomized response mechanism, denoted as $M(\vec{r})$, aims to protect each individual response $\response_i$ (up to some privacy level $\epsilon$, see below for details), given that the responses are used to compute an output $o = f(M(\vec{\response}))$ that will be accessible to the attacker.
Let $M(\vec{\response})$ randomize each response $M(\response_i)$ as follows:
$$
p(M(\response_i) = v) =
\begin{cases}
\frac{e^\epsilon}{e^\epsilon + 1} & \text{if } v  = \response_i,\\
1 - \frac{e^\epsilon}{e^\epsilon + 1} & \text{otherwise~.}
\end{cases}
$$
It is well-known that this mechanism ensures $\epsilon$-differential privacy:
Consider two neighboring datasets $\vec{\response}, \vec{\response}'$ that differ in record $\response_v \not = \response'_v$, and output $\vec{a}$, then the following holds:
$$
\frac{p(M(\vec{\response}) = \vec{a})}{p(M(\vec{\response}') = \vec{a})} =
\frac{\prod_i p(M(\response_i) = a_i)}{\prod_i p(M(\response'_i) = a_i)} \cdot \frac{p(M(\response_v) = a_v)}{p(M(\response'_v) = a_v)}~.
$$
In the case that $a_v  = \response_v$, we have
$$
\frac{e^\epsilon / e^\epsilon + 1}{1 - (e^\epsilon / e^\epsilon + 1)} =
e^\epsilon~.
$$
Otherwise, if $a_v  \not= \response_v$, then
$$
\frac{1 - (e^\epsilon / e^\epsilon + 1)}{e^\epsilon / e^\epsilon + 1} = 1/e^\epsilon~.
$$
Thus, $p(M(\vec{\response}) = \vec{a}) / p(M(\vec{\response}') = \vec{a}) \leq e^\epsilon$.
This derivation is sufficient to show that $M$ enforces $\epsilon$-differential privacy, as the definition of differential privacy for this system requires $p(M(\vec{\response}) = \vec{a}) \leq e^\epsilon p(M(\vec{\response}') = \vec{a})$ for neighboring datasets $\vec{\response}, \vec{\response}'$ and all $\vec{a}$ in the range of $M$~\cite{DBLP:journals/fttcs/DworkR14}.

\begin{algorithm}[t]
	\caption{Binary Randomized Response with $n$ responses.}\label{fig:randomized-response}
	\begin{align*}
		&\stmtAsgn{\brrOutput}{0}\symSemi \\
		&\symFor~{i} \texttt{ in 1..}{n} ~\blockStart \\
		&\quad \stmtDist{\brrResponse_i}{\bernoulli(\brrResponseProb)}\symSemi \\
		&\quad \stmtDist{\brrRandomize_i}{\bernoulli\left(\brrRandomizeProb\right)}\symSemi \\
		&\quad \symIf~{\brrRandomize_i}~\blockStart \stmtAsgn{\brrOutput}{\brrOutput + \brrResponse_i} \blockEnd~\symElse~\blockStart \stmtAsgn{\brrOutput}{\brrOutput + \neg \brrResponse_i} \blockEnd \\
		&\blockEnd \\
		&\stmtAsgn{o}{o/n}
	\end{align*}
\end{algorithm}

\Cref{fig:randomized-response} implements this mechanism in a program that returns the average ratio of positive answers in a poll---this could be, e.g., the ratio of people who voted for a specific political party.
Variable $o$ is the average ratio of positive responses.
Since the mechanism ensures $\epsilon$-differential privacy for each answer, by the post-processing theorem~\cite{DBLP:journals/fttcs/DworkR14}, $o$ is also $\epsilon$-differentially private.
We are concerned with quantifying the amount of leakage for each variable $\response_i$ given that the attacker has access to $o$.
Variable $\epsilon$ models the level of privacy.
Values of $\epsilon$ close to 0 mean high privacy.
The larger the value of  $\epsilon$ the lower the level of privacy---up to $\epsilon = \infty$ meaning no privacy protection.
The value of $p$ defines attacker prior knowledge.
For instance, $p=1/2$ models non-informative attackers without background knowledge; i.e., the attacker considers positive and negative answers equally likely.
Different values of $p$ model attackers with different types of background knowledge---e.g., attackers with access to information indicating that a positive or negative answer is more likely.
Below we derive analytical solutions for leakage metrics which depend on $\epsilon$ and $p$, thus allowing us to analyze leakage for different types of attacker background knowledge ($p$) and privacy levels ($\epsilon$).
Since the overall output is $\epsilon$-differentially private and each loop iteration is independent, we focus only on the loop body in the analysis below.
Let $\stmt$ denote the loop body of the above program.

\paragraph*{Entropy of responses.}
As an introductory application of \Cref{proposition:wp-entropy}, we first calculate the entropy of the distribution of responses, $\entropy(\trv{\brrResponse_i})$:
\begin{align*}
	\entropy(\trv{\brrResponse_i}) \quad&=\quad \cwp{\stmt}(\alog\condforce{\cwp{\stmt}(\iverson{\aux{\brrResponse_i} = \brrResponse_i}, \expOne)\substBy{\aux{\brrResponse_i}}{\brrResponse_i}}, \expOne)~.
\end{align*}
We proceed with the calculation inside out and start with the inner $\symCwp$ expression.
This calculation is decomposed two further $\symWp$ calculations.
\begin{align*}
	\cwp{\stmt}(\iverson{\aux{\brrResponse_i} = \brrResponse_i}, \expOne) \morespace{=} (\wp{\stmt}(\iverson{\aux{\brrResponse_i} = \brrResponse_i}), \wp{\stmt}(\expOne))~.
\end{align*}
Notice that $\stmt$ does not contain loops itself, nor does it contain $\symObserve$ statements.
By standard rules of $\symWp$, we have $\wp{\stmt}(\expOne) = \expOne$.
The calculation for $\wp{\stmt}(\iverson{\aux{\brrResponse_i} = \brrResponse_i})$ proceeds bottom-up through the program, however only the first line in $\stmt$ has an effect on the expectation $\iverson{\aux{\brrResponse_i} = \brrResponse_i}$.
Following the definition of $\symWp$ (\Cref{table:wp-semantics}), we obtain
\begin{align*}
	\wp{\stmt}(\iverson{\aux{\brrResponse_i} = \brrResponse_i}) \morespace{=} p \cdot \iverson{\aux{\brrResponse_i} = \symTrue} + (1-p) \cdot \iverson{\aux{\brrResponse_i} = \symFalse}~.
\end{align*}
After substitution of $\aux{\brrResponse_i}$ by $\brrResponse_i$ and evaluating the conditional expectation, we get:
\begin{align*}
	\condforce{\cwp{\stmt}(\iverson{\aux{\brrResponse_i} = \brrResponse_i}, \expOne)\substBy{\aux{\brrResponse_i}}{\brrResponse_i}} \morespace{=} p \cdot \iverson{{\brrResponse_i} = \symTrue} + (1-p) \cdot \iverson{{\brrResponse_i} = \symFalse}~.
\end{align*}
A similar calculation and simplifications for the outer $\symCwp$ expression yield the final expectation for the entropy of the distribution of responses:
\begin{align*}
	\entropy(\trv{\brrResponse_i})
		&= p \cdot \alog(p \cdot \iverson{\symTrue = \symTrue} + (1-p) \cdot \iverson{\symFalse = \symTrue}) \\
		&\quad+ (1-p) \cdot \alog(p \cdot \iverson{\symTrue = \symFalse} + (1-p) \cdot \iverson{\symFalse = \symFalse}) \\
		&= p \cdot \alog(p) + (1-p) \cdot \alog(1-p)~.
\end{align*}
As expected, this is the well-known expression for the entropy of the Bernoulli distribution with parameter $p$.
The entropy $\entropy(\trv{\brrResponse_i})$ is maximal for $p = 0.5$ and zero for the extremal values $p = 0$ and $p = 1$.
\paragraph*{Entropy of outputs.}
As a more involved application of \Cref{proposition:wp-entropy}, we calculate the entropy of the distribution of outputs, $\entropy(\trv{\brrOutput_i})$.
We follow the same steps as above and start with $\wp{\stmt}(\iverson{\aux{\brrOutput_i} = \brrOutput_i})$.
The calculation is shown in \Cref{fig:output-entropy-wp-calculation}.
\begin{figure}[t]
	\begin{align*}
	&\wpcomp{\frac{1}{e^\epsilon + 1} \cdot \left(\iverson{\aux{\brrOutput_i} = \symTrue} \cdot \left(p \cdot e^\epsilon - p + 1\right) + \iverson{\aux{\brrOutput_i} = \symFalse} \cdot \left(-p \cdot e^\epsilon + p + e\right) \right) } \\
	&\wpcomp{
		\begin{aligned}[t]
			&p \cdot \left(\frac{e^\epsilon}{e^\epsilon + 1} \cdot \iverson{\aux{\brrOutput_i} = \symTrue} + \frac{1}{e^\epsilon + 1} \cdot \iverson{\neg \aux{\brrOutput_i} = \symFalse} \right) \\
			&\quad + \left(1-p\right) \cdot \left(\frac{e^\epsilon}{e^\epsilon + 1} \cdot \iverson{\aux{\brrOutput_i} = \symFalse} + \frac{1}{e^\epsilon + 1} \cdot \iverson{\neg \aux{\brrOutput_i} = \symTrue} \right)
		\end{aligned}
	} \\
	&\stmtDist{\brrResponse_i}{\bernoulli\left(\brrResponseProb\right)}\symSemi \\
	&\wpcomp{ \frac{e^\epsilon}{e^\epsilon + 1} \cdot \iverson{\aux{\brrOutput_i} = \brrResponse_i} + \frac{1}{e^\epsilon + 1} \cdot \iverson{\aux{\brrOutput_i} = \neg \brrResponse_i} } \\
	&\stmtDist{\brrRandomize_i}{\bernoulli\left(\brrRandomizeProb\right)}\symSemi \\
	&\wpcomp{ \iverson{\brrRandomize_i} \cdot \iverson{\aux{\brrOutput_i} = \brrResponse_i} + \iverson{\neg \brrRandomize_i} \cdot \iverson{\aux{\brrOutput_i} = \neg \brrResponse_i} } \\
	&\symIf~{\brrRandomize_i}~\blockStart \\
	&\quad \wpcomp{ \iverson{\aux{\brrOutput_i} = \brrResponse_i} } \\
	&\quad \stmtAsgn{\brrOutput_i}{\brrResponse_i} \\
	&\quad \wpcomp{ \iverson{\aux{\brrOutput_i} = \brrOutput_i} } \\
	&\blockEnd~\symElse~\blockStart \\
	&\quad \wpcomp{ \iverson{\aux{\brrOutput_i} = \neg \brrResponse_i} } \\
	&\quad \stmtAsgn{\brrOutput_i}{\neg \brrResponse_i} \\
	&\quad \wpcomp{ \iverson{\aux{\brrOutput_i} = \brrOutput_i} } \\
	&\blockEnd \\
	&\wpcomp{ \iverson{\aux{\brrOutput_i} = \brrOutput_i} }
	\end{align*}

	\caption{Calculation of $\wp{\stmt}(\iverson{\aux{\brrOutput_i} = \brrOutput_i})$ to compute the entropy of outputs. The calculation proceeds bottom-up, with the final result being in the first line.}
	\label{fig:output-entropy-wp-calculation}
\end{figure}
We substitute the ghost variable $\underline{\brrOutput_i}$ by the program variable $\brrOutput_i$ and obtain:
\[
	\frac{1}{e^\epsilon + 1} \cdot \left( \iverson{\brrOutput_i = \symTrue} \cdot (p \cdot e^\epsilon - p + 1) + \iverson{\brrOutput_i = \symFalse} \cdot (-p \cdot e^\epsilon + p + e) \right)~.
\]
We apply the negated logarithm to this result and simplify:
\begin{align}
	&\begin{array}{ll}
		&\iverson{\brrOutput_i = \symTrue} \cdot \alog(p \cdot e^\epsilon - p + 1) \\
		&\quad + \iverson{\brrOutput_i = \symFalse} \cdot \alog(-p \cdot e^\epsilon + p + e) + \palog(e^\epsilon + 1)~.
	\end{array} \label{eq:wp-log-inner}
\end{align}
The next step is to compute the conditional expected value of the above expression again.
To avoid a lengthy calculation as in \Cref{fig:output-entropy-wp-calculation}, we instead use the fact that our expression for $\wp{\stmt}(\iverson{\aux{\brrOutput_i} = \brrOutput_i})$ describes the probability distribution function of results.
To make use this fact within the expectation transformer framework, we use several algebraic properties of the $\symWp$ transformer.
First, we notice that since $\aux{\brrOutput_i}$ is a ghost variable and by definition not modified in $\stmt$, we have for all expectations $\expA \in \Expectations$ and values $v$ for $\aux{\brrOutput_i}$:
\[
	\wp{\stmt}\left(\expA\substBy{\aux{\brrOutput_i}}{v}\right) = \wp{\stmt}\left(\expA\right)\substBy{\aux{\brrOutput_i}}{v}~.
\]
Therefore, we can conclude from the initial computation only using properties of the $\symWp$ transformer:
\begin{align*}
	\wp{\stmt}\left(\iverson{\symTrue = \brrOutput_i}\right) &= \wp{\stmt}\left(\iverson{\aux{\brrOutput_i} = \brrOutput_i}\right)\substBy{\aux{\brrOutput_i}}{\symTrue} = \frac{p \cdot e^\epsilon - p + 1}{e^\epsilon + 1} \\
	\wp{\stmt}\left(\iverson{\symFalse = \brrOutput_i}\right) &= \wp{\stmt}\left(\iverson{\aux{\brrOutput_i} = \brrOutput_i}\right)\substBy{\aux{\brrOutput_i}}{\symFalse} = \frac{-p \cdot e^\epsilon + p + e}{e^\epsilon + 1}
\end{align*}
To compute the (conditional) expected value of our intermediate result in \Cref{eq:wp-log-inner} and the above, we use linearity of expected values.
For the weakest pre-expectation transformer, this means that we can apply a quantitative \emph{frame rule}~\cite{DBLP:journals/pacmpl/BatzKKMN19} that allows to separate parts of the expectation from the computation that are not modified in $\stmt$.
Applying the rule of~\cite{DBLP:journals/pacmpl/BatzKKMN19} to our setting, we get:
\begin{align*}
	&\wp{\stmt}\left(\alog\left(\wp{\stmt}\left(\iverson{\aux{\brrOutput_i} = \brrOutput_i}\right)\substBy{\aux{\brrOutput_i}}{\brrOutput_i}\right)\right)\\
	&= \frac{p \cdot e^\epsilon - p + 1}{e^\epsilon + 1} \cdot \alog\left(\wp{\stmt}\left(\iverson{\symTrue = \brrOutput_i}\right)\right) \\
	&\quad+ \frac{-p \cdot e^\epsilon + p + e}{e^\epsilon + 1} \cdot \alog\left(\wp{\stmt}\left(\iverson{\symFalse = \brrOutput_i}\right)\right) + \palog\left(e^\epsilon + 1\right) \\
	&= \frac{p \cdot e^\epsilon - p + 1}{e^\epsilon + 1} \cdot\alog\left(p \cdot e^\epsilon - p + 1\right) \\
	&\quad+ \frac{-p \cdot e^\epsilon + p + e}{e^\epsilon + 1} \cdot \alog\left(-p \cdot e^\epsilon + p + e\right) + \palog\left(e^\epsilon + 1\right)~. \\
\end{align*}
\Cref{fig:ent-outputs} shows the values of $\entropy(\trv{\brrOutput_i})$ in dependence on $p$ and $\epsilon$.
For example, let $\epsilon = 0$ and $p = 0.5$.
Then, ${e^\epsilon}/(e^\epsilon + 1) = 0.5$.
The result of the evaluation is $1$, which is the maximal value for all $\epsilon$.
More generally, $\epsilon = 0$ yields maximal entropy for arbitrary $p$.
This confirms that $\epsilon = 0$ will provide the largest distortion.
\begin{figure}[t]%
	\centering
	\begin{minipage}{.5\textwidth}
		\centering
		\includegraphics[width=\textwidth]{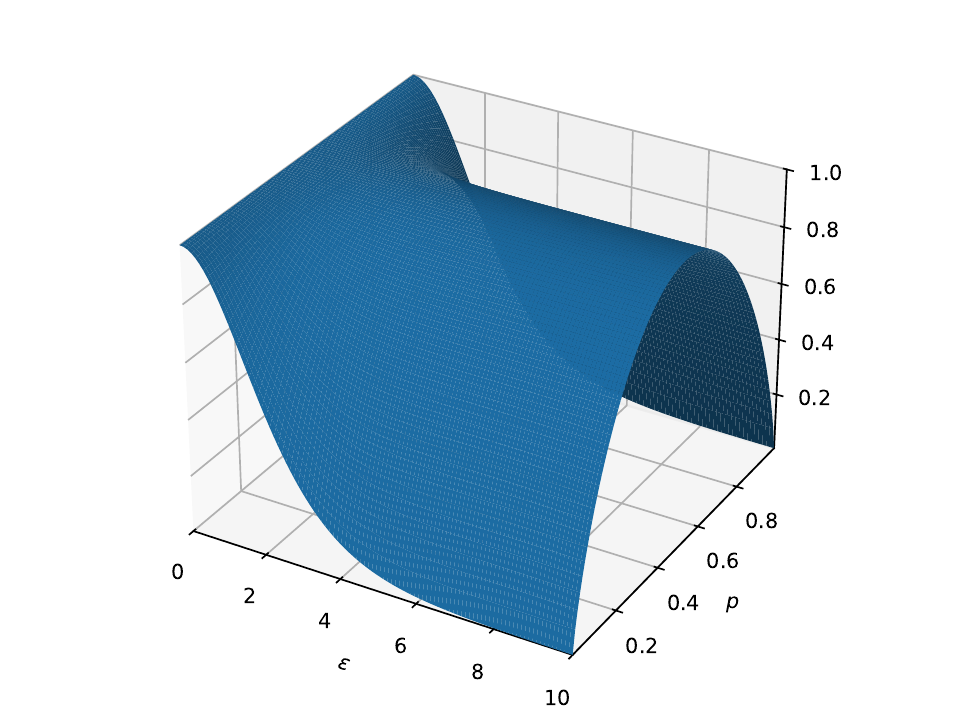}
		\captionof{figure}{Entropy of outputs $\entropy(\trv{\brrOutput_i})$.}
		\label{fig:ent-outputs}
	\end{minipage}%
	\begin{minipage}{.5\textwidth}
		\centering
		\includegraphics[width=\textwidth]{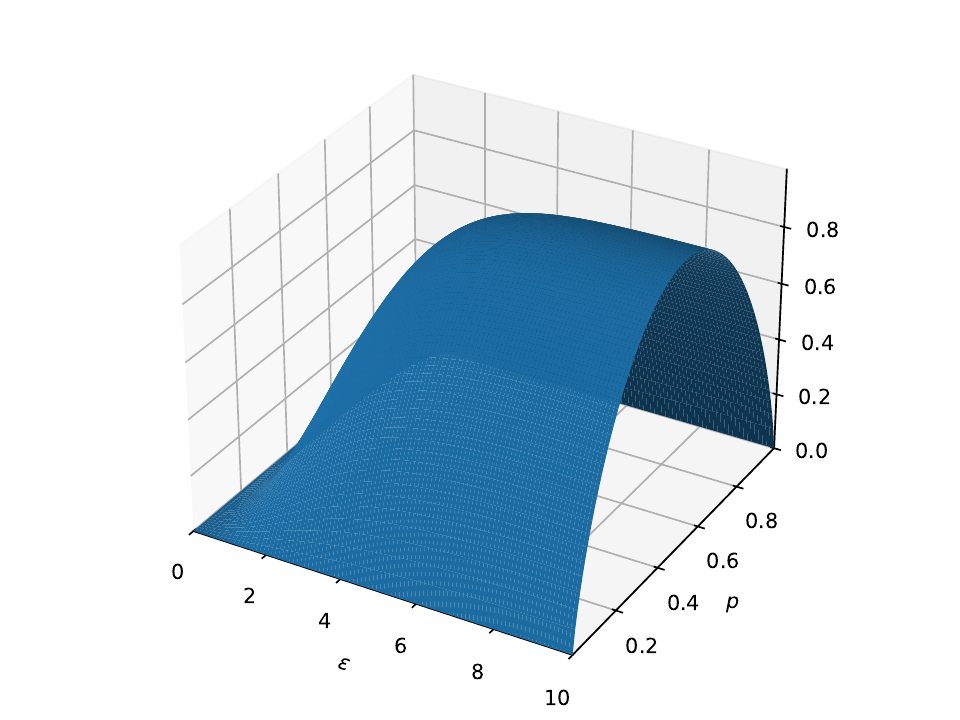}
		\captionof{figure}{Mutual information $\mi(\trv{\brrResponse_i}; \trv{\brrOutput_i})$\\ between secret response and output.}
		\label{fig:mi-response-output}
	\end{minipage}
\end{figure}
\paragraph{Conditional entropy.}
For the conditional entropy, we want to consider the case of $\entropy(\trv{\brrResponse_i} \mid \trv{\brrOutput_i})$, i.e. the conditional entropy of the value of the secret $\brrResponse_i$ given the output $\brrOutput_i$.
By \Cref{proposition:wp-cond-entropy}, we can calculate $\entropy(\trv{\brrResponse_i} \mid \trv{\brrOutput_i})$ as follows:
\begin{align*}
	&\cwp{\stmt}\left(\alog\left\condforceStart\cwp{\stmt}\left(\iverson{\aux{\brrOutput_i} = \brrOutput_i \land \aux{\brrResponse_i} = \brrResponse_i}, \iverson{\aux{\brrOutput_i} = \brrOutput_i}\right)\right\condforceEnd\substBy{\aux{\brrOutput_i}}{\brrOutput_i}\substBy{\aux{\brrResponse_i}}{\brrResponse_i}, \expOne \right)~.
\end{align*}
To compute the first parameter of the inner $\symCwp$ expression, one computes $\wp{\stmt}(\iverson{\aux{\brrOutput_i} = \brrOutput_i \land \aux{\brrResponse_i} = \brrResponse_i})$ which gives the joint probability distribution represented as an expectation.
For the second parameter of the inner $\symCwp$ expression, one again needs $\wp{\stmt}(\iverson{\aux{\brrOutput_i} = \brrOutput_i})$ (as calculated in \Cref{fig:output-entropy-wp-calculation}).
Because of its size, we omit the resulting expression for arbitrary $\epsilon$ and instead consider the specific case of $\epsilon = 0$ as an example again.
For $\epsilon = 0$, we get $\entropy(\trv{\brrResponse_i} \mid \trv{\brrOutput_i}) = -p \cdot \palog(p) - (1-p) \cdot \palog(1-p)$, which is the entropy of the original response $\brrResponse_i$.
In general, $\entropy(\trv{\brrResponse_i} \mid \trv{\brrOutput_i})$ will increase as $p$ goes towards $0.5$ and decrease with increasing values for $\epsilon$.
This shows that with increasing values of $\epsilon$, the attacker will have less uncertainty about the secret $\brrResponse_i$ after observing the output $\brrOutput_i$.
\paragraph*{Mutual information.}
The difference of the output entropy $\entropy(\trv{\brrOutput_i})$ and conditional entropy $\entropy(\trv{\brrResponse_i} \mid \trv{\brrOutput_i})$ define the mutual information $\mi(\trv{\brrResponse_i}; \trv{\brrOutput_i})$ between the secret $\brrResponse_i$ and the output $\brrOutput_i$.
\Cref{fig:mi-response-output} shows the values of $\mi(\trv{\brrResponse_i}; \trv{\brrOutput_i})$ in dependence on $p$ and $\epsilon$.
For $\epsilon = 0$, the mutual information $\mi(\trv{\brrResponse_i}; \trv{\brrOutput_i})$ is equal to zero.
In this case, the output does not tell an attacker anything about the secret response.
For larger values of $\epsilon$, more information will leak from the imperfect randomization and the mutual information will increase.
Let $p = 0.5$.
Then, we obtain mutual information values of approximately $0.002$ for $\epsilon = 0.1$ and approximately $0.999$ for $\epsilon = 10$.

\subsection{Gaussian DP (continuous)}

\begin{algorithm}[t]
	\caption{Gaussian Differential Privacy}
	\label{alg:GDP}
	\begin{algorithmic}[1]
		\STATE $\nfem := 0;$
		\FOR{$i$ \textbf{in range($4$)}}
			\STATE $\female_i \sim \bernoulli(0.75);$
			\STATE $\income_i \sim \normal(10,1);$
			\STATE $\nfem := \nfem + \female_i;$
		\ENDFOR
		\STATE \textbf{observe($nfem > 0$)};
		\STATE $\mu := 0;$
		\FOR{ $i$ \textbf{in range($4$)}}
			\IF{$\female_i = 1$}
			\STATE $\mu := \mu + \income_i;$
			\ELSE
			\STATE$\textbf{skip;}$
			\ENDIF
		\ENDFOR
		\IF{$\nfem=1$}
			\STATE $\nu := \normal(0, \var_1);$
		\ELSIF{$\nfem = 2$}
			\STATE $\nu := \normal(0, \var_2);$
			\STATE$\mu := 0.5\mu;$
		\ELSIF{$\nfem = 3$}
			\STATE $\nu := \normal(0, \var_3);$
			\STATE$\mu := 0.333\mu;$
		\ELSIF{$\nfem = 4$}
			\STATE $\nu := \normal(0, \var_4);$
			\STATE $\mu := 0.25\mu;$
		\ENDIF
		\STATE $\out := \mu + \nu;$
		\STATE \textbf{observe($\out = v$)}
	\end{algorithmic}
\end{algorithm}

The case study below uses the Gaussian mechanism to ensure differential privacy~\cite{DBLP:journals/fttcs/DworkR14} when releasing the average income of the females in the dataset.
The model under analysis is reported in \Cref{alg:GDP}.
We quantify the amount of information leaked in variables $\income_i$ with $i$ such that $\female_i$ equals true (as $\female_i$ indicates whether individual $i$ is female and $\income_i$ indicates his/her income).

To better show how the computation of the program semantics works, we consider a toy dataset with only four entries.
An individual in the dataset is a female with probability $0.75$ (line 3) and has an income distributed normally with mean $10$ and unitary variance (line 4).
These distributions model the attacker's prior knowledge about the gender and income of the individuals in the dataset, respectively.
The variable $\nfem$ counts the total number of females present in the dataset, and we constrain it to be strictly positive (line 7). Variable $\mu$ is the aggregate information that we want to safely release. It stores the mean female income, computed in the for loop in lines 9-15 and normalized in the if statement in lines 16-27.
In order to disclose $\mu$ without leaking information on single incomes, the program injects Gaussian noise $\nu$ to $\mu$ (line 28).
Variable $\nu$ is assigned in lines 16-27, where the variance is computed as $\var = 2 \Delta^2 \log(1.25/\delta)/\epsilon^2$ where $\Delta = \frac{\max_\income - \min_\income}{\nfem}$ and $\delta = 0.1$.
The variance and mean of the noise distribution are set according to the Gaussian mechanism~\cite{DBLP:journals/fttcs/DworkR14}.
Variables $\Delta$, $\epsilon$ and $\delta$ are specific to the differential privacy mechanism.
Variable $\Delta$ is known as sensitivity and it captures how much the output $\mu$ can change by removing one record in the dataset.
Variable $\epsilon$ is the privacy level and it has the same meaning as in the randomized response case study~(\Cref{subsec:randomized_response}).
Variable $\delta$ is the error probability, and it is used to relax the definition of differential privacy.\footnote{A mechanism $M \colon \mathbb{D} \to \mathbb{O}$---for any domains of input datasets $\mathbb{D}$ and mechanism outputs $\mathbb{O}$---is $(\epsilon, \delta)$-differentially private if for datasets $D, D' \in \mathbb{D}$ differing in one record, it holds that $p(M(D) \in O) \leq e^\epsilon p(M(D') \in O) + \delta$ for all $O \in \mathbb{O}$~\cite{DBLP:journals/fttcs/DworkR14}.}
Finally, we consider both a constrained version of the program, where in line 29 we observe a particular value of $\out$ (we take $v = 9$), and an unconstrained version, in which line 29 is suppressed.
By instantiating the properties presented in~\Cref{subsec:continuous_gmm_semantics}, we perform a leakage analysis for different privacy levels ($\epsilon$).

To investigate information leakage, we consider two distinct scenarios: for large values of $\epsilon$ ($\epsilon = 100$), we show that the knowledge of $\out$ gives a potential attacker information about $\income_i$; for small values of $\epsilon$ ($\epsilon = 0.1$) instead, we show that the leakage is greatly reduced, in accordance with the theory of $\epsilon$ differential privacy. We observe that \Cref{alg:GDP} satisfies the sufficient conditions of \Cref{prop:exact}, so the semantics computed by SOGA is the exact semantics.

To compute leakage metrics, we first need to compute the posterior distribution of the unconstrained program using the rules in Table \ref{tab:gaussian-semantics}. Let us give an outline of how the computation is performed. We suppose that at the beginning of the execution the distribution over all program variables is a Dirac delta centered in 0. Then, the first instruction does not modify the distribution. When entering the loop in line 2, $\female_0$ is assigned with a Bernoulli distribution. Applying the corresponding rule of \Cref{tab:gaussian-semantics}, one obtains that that after line 3 the joint over all program variables is a Gaussian mixture with two components: one with weight 0.75, $\female_0 = 1$, and one with weight 0.25 and $\female_0 = 0$. In both components, all other variables are set to zero. The next assignment modifies $\income_0$ in both components setting the mean to 10 and the variance to 1. Finally, the last instruction in the body loop increments $\nfem$ by $\female_0$. Since the update is performed in each component of the mixture, after the first iteration of the loop, the joint distribution is a mixture of two components: one with weight 0.75, $\female_0 = 1$ and $\nfem = 1$ and another component with weight 0.25, $\female_0 = 0$ and $\nfem = 0$. The following iterations will be computed in a similar way, and will increase the number of components up to 16 components. The observe statement in line 7 will have the effect of the discarding the component of the mixture where all $\female_i$ are set to zero and $\nfem=0$. The rest of the computation is carried out in a similar way.

As a result of the full computation, the joint is a mixture of 15 components. However, when considering the marginals, some components will have the same mean and variances, so can be lumped together as a single components. Using such marginals we compute the exact values of the metrics via numerical integration (with the Python library SciPy), while upper and lower bounds are derived analytically using \Cref{prop:Hbounds,prop:Lbounds}.

We report the results in \Cref{tab:GDP}. For $\entropy(\income_i)$ we only report the exact value since the marginal over $\income_i$ is a Gaussian and it can be computed exactly in closed form. To compute $\entropy(\income_i \mid \out = v)$, we consider the marginal over $\income_i$ in the constrained version of the program. The variable $\out$ is dependent on $\income_i$, therefore conditioning its value modifies the distribution over $\income_i$. Since in different components $\income_i$ and $\out$ have different covariances, the conditioning causes the distribution of $\income_i$ to become a GM. Moreover, depending on the value of $\epsilon$ the conditioning affects the distribution in a very different way: for $\epsilon = 100$, the variance of $\out$ is relatively small ($\sim 0.71$) and comparable to the covariance with $\income_i$ ($\sim 0.25$), so the observation affects significantly the distribution of $\income_i$ and consequently, its entropy; when $\epsilon = 0.1$ instead, the variance of $\out$ is much larger ($\sim 8333$) while the covariance is the same, so the conditioning only slightly affects the distribution of $\income_i$ and the value entropy remains the same up to the fifth decimal digit. For the same reason, when $\epsilon = 0.1$, the value of $\entropy(\income_i \mid \out)$ remains very close to the value of $\entropy(\income_i)$. The same argument explains the results for $\mi(\income_i; \out)$ and KL divergence: in both cases we have larger values for $\epsilon = 100$ and values close to 0 for $\epsilon = 0.1$, confirming that to minimize information leakage smaller values of $\epsilon$ have to be used.
\looseness -1

\begin{table}[t]
  \centering
  \renewcommand \tabcolsep {4pt}
	\begin{tabular}{
      @{}
      lcccccc
      @{}
    }
		\toprule
     & \multicolumn{3}{c}{$\epsilon = 100$} & \multicolumn{3}{c}{$\epsilon = 0.1$} \\[-1mm]
		\emph{Metrics} & \emph{Exact} & \emph{L.B.} & \emph{U.B.} & \emph{Exact} & \emph{L.B.} & \emph{U.B.} \\
		\midrule
		$\entropy(\income_i)$ & 1.418938 & - & - & 1.418938 & - & - \\
		$\entropy(\income_i \mid \out = v)$ & 1.386703 & 1.234257 & 2.249604 & 1.418931 & 1.265505 & 2.490682 \\
		$\entropy(\income_i \mid \out)$ & 1.370381 & -0.534517 & 3.085123 & 1.418933 & -2.593989 & 1.733485 \\
		$\mi(\income_i ; \out)$ & 0.048557 & -1.666184 & 1.953455 & 5.e-6 & -0.314547 & 4.012927 \\
		$ \kl(\income_i || \out)$ & 0.072827 & -1.221348 &  0.225273 & 3.e{-8} & -0.500000 & 0.153426 \\
		\bottomrule
	\end{tabular}

  \bigskip

	\caption{Exact metrics and analytical upper and lower bounds for them obtained with SOGA semantics.}
	\label{tab:GDP}
\end{table}

%%% Local Variables:
%%% mode: LaTeX
%%% TeX-master: "main"
%%% End:

% \input{conclusion}

\subsubsection*{Acknowledgements}

This work was partially supported by the ERC Advanced Research Grant FRAPPANT (grant no.\ 787914) and by the European Union's Horizon 2020 research and innovation programme under the Marie Skłodowska-Curie grant agreement no.\ 101008233.

\printbibliography

\end{document}